\newcommand{\RV}{{X}}
\newcommand{\RP}{{\XX}}
\newcommand{\Mr}{{M_{rel}}}
\newcommand{\Info}{\ic}
\newcommand{\Unif}{\uc}
\newcommand{\Mean}{{M}}
\newcommand{\un}{^{(n)}}
\newcommand{\Div}{{D}}
\newcommand{\Divkl}{{D_{KL}}}
\newcommand{\Dr}{{D_{rel}}}
\title{On typical encodings of multivariate ergodic sources} 
\begin{document}
\dedicatory{In memory of Fero Matúš}
  
\begin{abstract}
  We show that the typical coordinate-wise encoding of multivariate ergodic source into prescribed alphabets has the entropy profile close to the convolution of the entropy profile of the source and the modular polymatroid that is determined by the cardinalities of the output alphabets. We show that the proportion of the exceptional encodings that are not close to the convolution goes to zero doubly exponentially. The result holds for a class of multivariate sources that satisfy asymptotic equipartition property described via the mean fluctuation of the information functions. This class covers asymptotically mean stationary processes with ergodic mean, ergodic processes, irreducible Markov chains with an arbitrary initial distribution. We also proved that typical encodings yield the asymptotic equipartition property for the output variables. 
  These asymptotic results are based on an explicit lower bound of the proportion of encodings that transform a multivariate random variable into a variable with the entropy profile close to the suitable convolution.
\end{abstract}

\maketitle
\author{
 M. Kupsa
 \footnote{\tiny Institute of Information Theory and
   Automation, The Academy of Sciences of the Czech Republic, Prague 8, CZ-18208,
 \email{kupsa@utia.cas.cz}
}



\section{Introduction}

In Information theory and namely in Network coding theory, the encoding of multiple possibly correlated sources have been extensively studied in the last two decades (for the main framework, overview and references see  
\cite{yeung2008information} and \cite{bassoli2013network}). The sources are usually assumed to be discrete and memory-less, so they are represented by sequences of i.i.d. discrete random variables. The important characteristics of multivariate sources, which we focus on, is its entropy profile (an entropic point corresponding to a given multivariate source, see \cite{Ka12thesis}).

We deal with a problem of how the entropy profile changes when "typical" coordinate-wise encodings into prescribed output alphabets are applied. As was shown in \cite[Theorem 3]{Ma07a}, in an asymptotic case, a typical encoding saves as much of the original information as possible. Namely, the conditional entropy of the encoded variable is naturally bounded from above by the conditional entropy of the source and also by the logarithm of the size of the output alphabet. Matúš showed that this bound is asymptotically tight. More literally,
a typical coordinate-wise encoding preserves almost all conditional entropy whenever the output alphabet is large enough, i.e., when the logarithm of the alphabet size exceeds the conditional entropy. If the logarithm is not larger, the conditional entropy of the encoded variable is close to the logarithm of the alphabet size. This observation is used to prove the closeness of the entropy region under the convolution with modular polymatroids (\cite[Theorem 2]{Ma07a}). The role of convolution in the research on the entropy region was then more explored in \cite{matuvs2016entropy}.    

In \cite{Ma07a}, the coordinate-wise encodings of the original multivariate source into prescribed alphabets are applied inductively, coordinate by coordinate, and in between these inductive steps, one has to pass from a random vector to its i.i.d. expansion (see the proof of Theorem 2 in the discussed paper). In particular, each encoding is applied to a different random vector. This procedure can be reinterpreted as simultaneous coordinate encodings used on one fixed i.i.d. expansion of the original entropy vector. But this simple reasoning does not allow to deduce that the entropy profile obtained for a specific encoding is also realized by the most of the coordinate-wise encodings from some natural domain, as it can be concluded in the one-dimensional case.

The first step towards the results for ``typical'' encodings in the multivariate case was done in \cite{MK10}, where the authors proved that the proportion of encodings of a two-dimensional random vector that realizes a given convolution goes to one doubly exponentially. It is also explained there that the encodings behave well, not only when they are applied on i.i.d. copies of some random vector, but also when we apply them on any vector that is drawn from bi-variate (strictly stationary) ergodic source.

Our work presented in this article extends the control on the entropy-profile of transformed variables for the general multivariate case whenever the original source possesses asymptotic equipartition property (AEP). Our main results are stated in Theorems \ref{thm:encodings_convolution}, \ref{thm:encodings_convolution_sequence} and \ref{thm:encodings_convolution_erg}. In Theorem \ref{thm:encodings_convolution}, we introduce an explicit lower bound of the proportion of encodings that transform a multivariate random variable into a variable with the entropy profile close to the suitable convolution. The bound is given in terms of the entropy of the original random variable and works when the mean fluctuation of information functions is small. We use the bound to develop an asymptotic scheme in Theorem \ref{thm:encodings_convolution_sequence} that is applied to get the result on typical encodings for ergodic processes (Theorem \ref{thm:encodings_convolution_erg}). Last but not least, we control not only the entropy profile of the transformed variables, but we show that they also possess some kind of equipartition property.
To describe the equipartition property of a random variable, we introduce a new quantity that measures the non-uniformness in a way that is well preserved via transformations (encodings), conditioning, and i.i.d. expansions, namely the mean fluctuation of the information functions, see Section \ref{sec:equidistr} (and Section \ref{sec:mean fluctuation} for more details).

Let us stress out that the extraction of the critical property, namely the asymptotic equipartition property, which is sufficient assumption in Theorem \ref{thm:encodings_convolution}, allows us to extend the previous works on this topics in two significant ways; the source needs to be neither i.i.d., nor stationary. It is satisfactory if the original process is asymptotically mean stationary with ergodic mean, as defined in \cite[page 16]{gray2011entropy} (the mentioned result can be found therein as Theorem 4.1 and Section 4.5), e.g., finite-state Markov chains of any order, its functions, block codings of stationary processes, etc. For the same reason, our results can be extended to the situation when one considers a family of ergodic random fields (corresponding with an action of an amenable group, see \cite{Kie75}) instead of a family of random processes. In Section \ref{sec:ergodic}, we introduce an example of a non-stationary Markov chain and its encodings in a binary alphabet that shows the generality of our results.

We generalize the known results also in another important direction. Our results also cover the situation when only some coordinates of the multivariate source are encoded. In particular, the theorem can be used to describe the common entropy profile of the family that consists of the original variables as well as the encoded ones.

\section{Equipartition property and mean fluctuation of the information function}
\label{sec:equidistr}

Let us recall some basic notions from Information Theory. Let $\PP$ be a probability on a finite set $\xc$ (not necessarily a subset of real or complex domain). The information function $\ic_\PP:\xc\to\RR$ is given by the formula $\ic_\PP(x)=-\ln \PP(x)$. The entropy $H(\PP)$ is its expectation, i.e. $H(\PP)=\sum\PP(x)(-\ln \PP(x))$, where we sum over all $x\in \xc$ of positive probability. The set of all $x\in \xc$ of positive probability is the support of $\PP$, denoted by $s(\PP)$. A discrete finite-valued random variable $\RV$, e.g. a measurable map from a probability space $(\Omega,\PP)$ into a finite set $\xc$, induces in a natural way a discrete probability measure $\PP_X$ on $\xc$, so we can extend immediately the previous notions, the information function, the entropy and the support, as follows:
$$s(X):=s(\PP_X), \ic_X:=\ic_{\PP_X}, H(X):=H(\PP_X).$$
We will often use in the text this small abuse of notation when the random variable is written down instead of the induced probability.  

Let $\RP=(\RV(n))_{n\in\NN}$ be a random process with values in a finite alphabet $\ac$. For given $n$, $\RV\un=(\RV(i))^n_{i=1}$ is understood as a random variable with values in $\ac^n$.  The entropy rate of the process $\RP$ is defined by the formula
$$h(\RP)=\lim_{n\to\infty}\frac1n H(\RV\un).$$

The asymptotic equipartition property (AEP) claims that the entropy rate is well defined, the limit above exists, and it is equal to the limit of $\frac1n \ic_{\RV\un}$ with probability one (it is well known that i.i.d. processes, ergodic processes posses AEP, see \cite{cover2012elements}). In particular, the AEP claims that $\frac1n \ic_{\RV\un}$ is very "flat''. In order to describe such behavior in an efficient manner, we introduce the following quantities for a discrete probability $\PP$ and a real value $a\in\RR$:
\begin{align*}
  \Mean(\PP,a)&=\EE_{\PP}\left|\ic_\PP-a \right| &, \Mean(\PP)&=\Mean(\PP,H(\PP)).\\
\end{align*}
We call $\Mean(\PP,a)$ the mean fluctuation of the information function from $a$ and $\Mean(\PP)$ simply the mean fluctuation. We again extend these definitions for random variables with finite values in the natural way,  $M(X,a):=M(\PP_X,a)$ and $M(X):=M(\PP_X)$.

\section{Multivariate Random Variables}
\label{sec:multivariate}

In order to study the multivariate random variables, we admit that a random variable $X$ has its own structure, namely, that $X=(X_i)^k_{i=1}$, where $X_i$ is a random variable with values in a finite alphabet $\ac_i$, $i\leq k$. First, we introduce necessary notations. Let us denote the set $\{1,2,\ldots,k\}$ by $\hat k$ and the set of all subsets of $\hat k$ by $\tilde k$ (the same convention is used for other natural numbers). The entropy profile of $X$ is the point $\vec{H}(X)\in\RR^{\tilde k}$ given by the formula, $\vec{H}(X)=(H(X_I))_{I\in{\tilde k}}$, where $X_I$ denotes the sub-vector $(X_i)_{i\in I}$ ($H(X_\emptyset)$ is defined to be zero). In a consistent manner, we put $M(X_\emptyset)=0$ and define the maximal mean fluctuation
$$M'(X)=\max_{I\in\tilde k}M((X_i)_{i\in I}).$$

We will consider coordinate-wise encodings that encode all or only some of the coordinates. For this generality, $\ell\leq k$ is specified, as well as the family $\bc=(\bc_i)^{\ell}_{i=1}$ of finite output alphabets. 

A mapping $f$ from $\prod^k_{i=1}\ac_i$ to $\prod^\ell_{i=1}\bc_i\times \prod^k_{k=\ell+1}\ac_i$ is a coordinate-wise encodings of the first $\ell$ coordinates if it satisfies the formula 

$$f(x_1,\ldots,x_k)=(f_1(x_1), f_2(x_2),\ldots, f_\ell(x_\ell), x_{\ell+1}, x_{\ell+2},\dots, x_{k})$$
for some family of mappings $(f_i)_{i\leq \ell}$, where $f_i:\ac_i\to\bc_i$, $i\leq\ell$.

Since $f$ is determined by $(f_i)_{i\leq \ell}$, we identify the mapping with the family, i.e. we write $f=(f_i)_{i\leq\ell}$ . The set of all these coordinate-wise encodings of the first $\ell$ coordinates is denoted by $\ec_\ell$.

We define also the entropy profile $\vec{H}(\bc)\in\RR^{\tilde\ell}$ of the output alphabets as follows
$$(\vec{H}(\bc))_I=\ln |\prod_{i\in I}\bc_i|,\qquad I\in\tilde\ell.$$

We are interested in the question how the encodings change the entropy profile, i.e. what we can say about $\vec{H}(f(X))=H(f_I(X_I))_{I\in\tilde k}$. It is quite straightforward to show that the profile is coordinate-wise bounded by the convolution $\vec{H}(X)*\vec{H}(\bc)$. In general, convolution $w=u*v$ of two points $u\in\RR^{\tilde k}$ and $v\in\RR^{\tilde \ell}$, $\ell\leq k$, is the point from $\RR^{\tilde k}$ defined by the formula
$$(u*v)_I=\min_{J\subset I\cap\hat\ell}(u_{I\setminus J}+v_J),\qquad I\in\tilde k.$$

\begin{pro}\label{pro:convolution_as_upper-bound}
  Let $f$ be an encoding from $\ec_\ell$. Then 
  $$H(f(X))_I\leq(\vec{H}(X)*\vec{H}(\bc))_I,\qquad I\in\tilde k.$$
  \end{pro}
  \begin{proof}
    By the definition of the convolution, it is enough to prove that
    $$H(f_I(X_I))\leq H(X_{I\setminus J})+(\vec{H}(\bc))_J,$$
    for all $I\subset\tilde k$ and $J\subset I\cap\tilde\ell$. But $H(f_I(X_I))$ is bounded from above by the sum of $H(f_{I\setminus J}(X_{I\setminus J}))$ and $H(f_J(X_J))$, where the former entropy is surely bounded by the entropy of the source $H(X_{I\setminus J})$ and the latter by the logarithm of the cardinality of the output set $\prod_{i\in J}\bc_J$.  
  \end{proof}

In the next theorem, we show much more, namely that for a large part of encodings, the entropy profile $\vec{H}(f(X))$ is not just bounded by the convolution, but it is close to this bound. We also show that the maximal mean fluctuation can be very small at the same moment. The proof of Theorem \ref{thm:encodings_convolution} is postponed to the last section.

\begin{thm}\label{thm:encodings_convolution}
  Let $1\leq k$, $1\geq\eps>0$, $\ell\leq k$, $\delta=\left(\frac{\eps}{121}\right)^{2^{|\ell|}}$, $H>0$ and $X=(X_i)_{i\leq k}$ be a family of discrete random variables such that $X_i$ takes values in $\ac_i$, $i\leq k$, and
    \begin{align}\label{ineq:HbigI}
H>H(X_\ell),\qquad H\geq \frac{2\ln 2}{\delta}, \qquad H\geq \frac{M'(X)}{\delta}.
  \end{align}
  The proportion of those encodings $f\in\ec_\ell$ that satisfy the conditions
    \begin{align}\label{ineq:MHclose}
M'(f(X))\leq \eps H\qquad \&\qquad \left\lVert H(f(X))-\vec{H}(X)*\vec{H}(\bc)\right\rVert_{\max} \leq \eps H,
  \end{align}
is at least
$$1-|\ell|2^{k-1}\exp\left(-\frac{\ln 2}{2} e^{\delta H}+(\vec{H}(\bc))_{\hat \ell}+2H\right).$$
\end{thm}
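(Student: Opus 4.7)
The plan is to reduce the multivariate statement to a univariate core lemma about uniformly random maps, then iterate it across the $\ell$ encoded coordinates. The core lemma, which I would establish first, asserts that for any random variable $Y$ on a finite alphabet with small mean fluctuation, a uniformly random map $f\colon s(Y)\to\mathcal{B}$ satisfies both $H(f(Y))\approx\min(H(Y),\ln|\mathcal{B}|)$ and $M(f(Y))$ small, outside a set of $f$'s of doubly exponentially small proportion. Its proof should proceed by a concentration argument on the independent choices $f(y)$, $y\in s(Y)$, exploiting that small $M(Y)$ forces the weights $\PP(Y=y)$ to be essentially uniform on a typical subset of size roughly $e^{H(Y)}$.

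I would then bootstrap this to a conditional version: for a pair $(Y,Z)$, a typical encoding $f$ of $Y$ alone gives $H(f(Y)\mid Z)$ close to $\min(H(Y\mid Z),\ln|\mathcal{B}|)$ while keeping the joint mean fluctuations of $(f(Y),Z_I)$ under control for every $I$ in the index set of $Z$. This follows by applying the univariate lemma fibrewise to $Y\mid Z=z$ and integrating in $z$, using that conditional mean fluctuations are dominated by a constant multiple of $M'(X)$ via the identity $\ic_{Y\mid Z}=\ic_{Y,Z}-\ic_Z$ and the triangle inequality.

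The main iteration is on $\ell$: assuming the statement for $\ell-1$ encoded coordinates, I would apply the conditional lemma to encode the $\ell$-th coordinate (treating the other $k-1$ coordinates as $Z$) and then invoke the inductive hypothesis on the resulting vector, whose mean fluctuation profile is preserved up to a controlled loss. The shape $\delta=(\eps/121)^{2^\ell}$ records that at each step the admissible relative tolerance on $M'$ must essentially square — each subset $I\subseteq\hat\ell$ in the convolution supplies an independent potential failure mode requiring its own slack — and the constant $121$ absorbs the losses incurred in passing between $L^1$ estimates on $\ic$ and maximum-norm estimates on $\vec{H}$. Combined with Proposition \ref{pro:convolution_as_upper-bound} (which provides the upper bound for free), the maximum norm in \eqref{ineq:MHclose} is then controlled by a union bound over the $\ell\cdot 2^{k-1}$ pairs (encoded coordinate $j\leq\ell$, subset $I\ni j$), since subsets $I$ not containing $j$ are unaffected by $f_j$.

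The main obstacle I anticipate is preserving a usable bound on the profile $M'$ across the iteration. A coordinate-wise encoding both compresses the alphabet, which \emph{a priori} inflates the fluctuation of the information function, and injects new randomness through the choice of $f$; one must show that for \emph{typical} $f$ these two effects cancel to leading order. Making the constants square cleanly at each of the $\ell$ levels while still dominating the final failure bound by $\exp(-\tfrac{\ln 2}{2}e^{\delta H}+(\vec{H}(\bc))_{\hat\ell}+2H)$, under the slack conditions $H\geq 2\ln 2/\delta$ and $H\geq M'(X)/\delta$, will be the real technical crux.
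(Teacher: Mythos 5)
Your proposal follows essentially the same route as the paper: a single\nobreakdash-coordinate concentration lemma for random maps (the paper's Lemma~\ref{lem:small_atoms} restricted to the typical set where $|\ic-H|$ is small), bootstrapped to a conditional version by treating the fibres over the conditioning variable and combining them (Proposition~\ref{pro:number_of_encodings_conditional}), then an induction on $\ell$ with a union bound over the $2^{k-1}$ conditioning subsets and a tolerance that squares at each level, yielding the exponent $2^{|\ell|}$ in $\delta$. The only cosmetic deviation is the order of the recursion (you encode the $\ell$-th coordinate first against the raw remaining coordinates, the paper encodes it last against the already-encoded ones); both work by the associativity of the convolution with modular polymatroids, so the plan is sound.
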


Let us notice that for a fixed dimension $k$, the bound for the proportion of the encodings in the theorem goes to one very fast ("doubly exponentially") with respect to $H$, provided $H$ goes to infinity, and $M'/H$ goes to zero. In the next section, we apply this idea and the theorem in the situation when an ergodic source and an a.m.s. source is encoded.

Another important remark is that we focus on the encodings that realized not only the entropy close to the convolution but also the variable with very small mean fluctuation of the information function. In other words, we ask the encoding to provide the output with some kind of equipartition property. This is essential to build an inductive proof. After an application of the theorem along with the induction, the intermediate random variables we get after the encoding satisfies the assumptions of the theorem again.

\section{Asymptotic scheme}
In this section, instead of encodings of one family of random variables, we will consider a sequence of families and their encodings to different alphabets. Our aim is to construct an asymptotic scheme that is presented in Theorem \ref{thm:encodings_convolution_sequence}.

We fix $\ell\leq k$. For given $n\geq 1$, we consider a family of random variables $X\un=(X\un_i)_{i\leq k}$ defined on the same probability space, where $X\un_i$ takes values in a finite set $\ac\un_i$. Put $\ac\un=\prod_{i\leq k}\ac\un_i$. As well as in the previous section, we fix a family of finite sets $\bc\un=(\bc\un_i)_{i\leq\ell}$ and denote by $\ec\un_\ell$ the set of all mappings from $\ac\un$ to $\bc\un$ of the form  
$$f(x_1,\ldots,x_k)=(f_1(x_1), f_2(x_2),\ldots, f_\ell(x_\ell),x_{\ell+1},x_{\ell+2},\ldots,x_{k}),$$
for some $f_i:\ac\un_i\to\bc\un_i$, $i\leq \ell$. Let us recall, that we call these mappings coordinate-wise encodings of the first $\ell$ coordinates.

\begin{thm}\label{thm:encodings_convolution_sequence}
 Let $\frac{\vec{H}(X\un)}{n}$ converges to a non-zero $h\in\RR^{\tilde k}$, $\frac{M'(X\un)}{n}$ tends to zero and $\frac{\vec{H}(\bc\un)}n$ converges to $b\in\RR^{\tilde\ell}$.
 
 If $1\geq\eps>0$, $\delta<\left(\frac{\min(\eps,h_{\hat k})}{121 h_{\hat k}}\right)^{2^{|\ell|}}$, $n$ large enough, then the proportion of those encodings $f\in\ec_\ell\un$ that satisfy the conditions
    \begin{align*}
\frac{M'(f(X\un))}{n}\leq \eps\qquad \&\qquad \left\lVert \frac{\vec{H}(f(X\un))}n-h*b\right\rVert_{\max} \leq \eps,
  \end{align*}
is at least
$$1-\exp\left(- e^{\delta n}\right).$$
\end{thm}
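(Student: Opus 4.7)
The strategy is to derive Theorem \ref{thm:encodings_convolution_sequence} from the finitary Theorem \ref{thm:encodings_convolution}, applied to each $X\un$ with parameters $H = H_n$ scaling linearly in $n$ and a one-shot $\eps$-parameter calibrated so that, after dividing by $n$, the finitary conclusion becomes the asymptotic one.

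Set $\eps_1 := \min(\eps,h_{\hat k})/h_{\hat k} \in (0,1]$ (well-defined because $h\ne 0$ forces $h_{\hat k}>0$) and $\delta_1 := (\eps_1/121)^{2^{|\ell|}}$, so that the hypothesis on $\delta$ in the theorem reads $\delta<\delta_1$. Choose $H_n$ with $H_n>H(X\un_{\hat\ell})$ and $H_n/n\to h_{\hat k}$; one explicit choice is $H_n := H(X\un_{\hat k})+1$, which satisfies both since $H(X\un_{\hat\ell})\le H(X\un_{\hat k})$ and $H(X\un_{\hat k})/n\to h_{\hat k}$. The three quantitative hypotheses of Theorem \ref{thm:encodings_convolution} then hold for all sufficiently large $n$: the first by construction; the second because $\delta_1$ is a fixed positive constant while $H_n\to\infty$; the third because $H_n/n\to h_{\hat k}>0$ while $M'(X\un)/n\to 0$.

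Applying Theorem \ref{thm:encodings_convolution} yields, for all but a proportion at most $|\ell|2^{k-1}\exp\bigl(-\tfrac{\ln 2}{2}e^{\delta_1 H_n}+(\vec H(\bc\un))_{\hat\ell}+2H_n\bigr)$ of $f\in\ec\un_\ell$, the finitary bounds $M'(f(X\un))\le\eps_1 H_n$ and $\lVert\vec H(f(X\un))-\vec H(X\un)*\vec H(\bc\un)\rVert_{\max}\le\eps_1 H_n$. Dividing through by $n$ and using $\eps_1 H_n/n\to\eps_1 h_{\hat k}=\min(\eps,h_{\hat k})\le\eps$, combined with the continuity of convolution (which gives $\vec H(X\un)*\vec H(\bc\un)/n\to h*b$), both inequalities of Theorem \ref{thm:encodings_convolution_sequence} follow for $n$ large (a slight shrinkage of $\eps_1$ absorbs the $o(1)$ corrections).

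The main technical step is to show that the stated exceptional proportion is further dominated by $\exp(-e^{\delta n})$. The $O(n)$ summand $(\vec H(\bc\un))_{\hat\ell}+2H_n$ is negligible next to the doubly exponential leading term $-\tfrac{\ln 2}{2}e^{\delta_1 H_n}$, whose rate $\delta_1 H_n/n$ tends to $\delta_1 h_{\hat k}$. Matching this rate to the required $\delta$ is where the precise form of the hypothesis on $\delta$ enters: a small safety buffer in the choice of $\eps_1$ (small enough that $\eps_1 H_n\le\eps n$ still holds asymptotically but $\delta_1 h_{\hat k}-\delta$ is bounded below by a positive constant) secures $\delta_1 H_n-\delta n\ge\gamma n$ for some $\gamma>0$ and all large $n$, which then trivially swallows the single-exponential correction. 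No ideas beyond Theorem \ref{thm:encodings_convolution} and the continuity of the functionals $\vec H$, $M'$, and $*$ are required.
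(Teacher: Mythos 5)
Your proposal follows the paper's own proof essentially step for step: both apply Theorem \ref{thm:encodings_convolution} to $X\un$ with $H\sim H(X\un_{\hat k})\sim nh_{\hat k}$ and a one-shot parameter just below $\min(\eps,h_{\hat k})/h_{\hat k}$, verify the three hypotheses for large $n$ from $M'(X\un)/n\to 0$ and $H(X\un)/n\to h_{\hat k}>0$, divide the conclusion by $n$ and replace $\vec{H}(X\un)*\vec{H}(\bc\un)/n$ by $h*b$ via the max-norm Lipschitz property of convolution, and finally dominate the linear terms by the double exponential. The only soft spot is one the paper shares: the leading rate is $\delta_1 H_n/n\to\delta_1 h_{\hat k}$, so the comparison with $e^{\delta n}$ really requires $\delta<\delta_1 h_{\hat k}$ rather than $\delta<\delta_1$, and your proposed fix of shrinking $\eps_1$ goes the wrong way (it only decreases $\delta_1 h_{\hat k}$); this is automatic when $h_{\hat k}\geq 1$ and is glossed over identically in the paper's proof (which writes $e^{\delta' n}$ where Theorem \ref{thm:encodings_convolution} yields $e^{\delta' H}$), so it is not a defect of your argument relative to the original.
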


The proof is postponed to the last section.

We developed the asymptotic scheme in the case when the limits of some numerical characteristics are assumed to exist. Nevertheless, the scheme does not require any structural relation between $X\un$ and $X^{(n+1)}$. In the next section, we will apply this scheme in the case when $X\un$ arises as the first $n$-coordinates of some process $(\RV(n))_{n\in\NN}$ and where $X^{(n+1)}$ contains $X\un$ as its beginning.

\section{Encodings of Ergodic processes and a.m.s. processes with ergodic mean}
\label{sec:ergodic}

Let $\RP=(\RV(n))_{n\in\NN}$ be a multivariate random process with values in a Cantor product of finite sets $\ac_i$, $1\leq i\leq k$. Put $\ac=\prod^k_{i=1}\ac_i$. Hence, each $\RV(n)$ is a tuple of random variables, $\RV(n)=(\RV_i(n))^k_{i=1}$. For a subset of coordinates $J\subset\hat k$, we define a sub-process $\RP_J=(\RV_J(n))_{n\in\NN}$ in the following way: $\RV_J(n)=(\RV_j(n))_{j\in J}$. As in the previous sections, $X\un$ stands for the vector $(X(1),X(2),\ldots,X(n))$, $X_J\un$ stands for $(X_J(1),X_J(2),\ldots,X_J(n))$.

We define an entropy profile of the multivariate process $\RP$ as the vector $\vec h(\RP)=(h(\RP_J))_{J\in\tilde k}$, where $h(\RP_J)$ is the entropy rate of the process $\RP_J$, i.e.
$$h(\RP_J)=\lim_{n\to\infty}\frac1n H(\RV_J(1),\RV_J(2),\ldots,\RV_J(n)),\qquad J\subset\hat k.$$

There is a quite large class of processes for which the entropy rates are well defined and $M(X\un_J)/n$ goes to zero for every $J\subset\hat k$. In order to explain this class we need to assign a process with the corresponding measure on the output-sequences. Namely, $\RP=(\RV(n))_{n\in\NN}$ with values in a finite set $A$ gives rise the measure $\RP_*\PP$ on $A^{\NN}$ that is determined by the equalities
$$\RP_*\PP([a_1\ldots a_n])=\PP(X(0)=a_0,\ldots,X(n)=a_n),\qquad n\in\NN, a_1,\ldots a_n\in A,$$
where
$$[a_1\ldots a_n]=\{(x_i)_{i\in\NN}\in A^{\NN}\mid x_i=a_i, \text{ for }i\leq n\},\qquad n\in\NN, a_1,\ldots a_n\in A.$$
The measure is defined on the $\sigma$-field $\fc$ generated by the above-mentioned sets that are usually called cylinders.

We define the shift-map $T$ on $A^\NN$ by the formula $T(x_1x_2...)=(x_2x_3...)$. A probability measure $\mu$ on $\fc$ is
\begin{itemize}
\item \emph{asymptotically mean stationary (a.m.s.)} if $\frac1n\sum^n_{i=1} \mu(T^{-i}F)$ converges, for every $F\in\fc$,
\item \emph{stationary} if $\mu(T^{-1}F)=\mu(F)$ for every $F\in\fc$.
\item \emph{ergodic} if it is stationary and $T^{-1}F=F$ and $F\in\fc$ implies $\mu(F)=\{0,1\}$.
\end{itemize}
We say that a process is a.m.s., stationary or ergodic, if the measure $\RP_*\PP$ has the corresponding property. If a process is a.m.s., then the formula
$$\RP^m_*\PP=\lim_{n\to\infty}\frac1n\sum^n_{i=1} \RP_*\PP(T^{-i}F), \qquad F\in\fc,$$  
defined a probability stationary measure on $\fc$ (the upper index "m" stands for "mean"). This measure is called the mean of the process. We will be interested in the a.m.s. processes with ergodic mean. The following theorem is a slightly weaker version of Corollary 4 in \cite{10.2307/2242939} translated into our notations and settings.

\begin{pro}[\cite{10.2307/2242939}]
  Let $\RP$ be an a.m.s. process with ergodic mean, $\mathbb{Y}$ be a stationary process with the same mean. Then the entropy rate $h(\RP)$ is well-defined and equal to $h(\mathbb Y)$. In addition, $\frac1n M(\RP^{(n)})$ goes to zero.
\end{pro}

\begin{cor}\label{cor:entropy-profile-ams}
  Let $\RP$ be an a.m.s. process with ergodic mean, $\mathbb{Y}$ be a stationary process with the same mean. Then the profile $\vec h(\RP)$ is well-defined and equal to $\vec h(\mathbb Y)$. In addition, $\frac1n M'(\RP_J^{(n)})$ goes to zero for every $J\subset\hat k$.
\end{cor}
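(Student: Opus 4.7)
The plan is to reduce the corollary to the preceding single-process proposition by applying it once to each sub-process $\RP_I$ for $I\subset\hat k$. Fix $J\subset\hat k$; I need simultaneously that $h(\RP_J)$ exists and equals $h(\mathbb{Y}_J)$, and that $\frac{1}{n}M'(\RP_J\un)\to 0$.

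The first step is to verify that for every $I\subset\hat k$ the sub-process $\RP_I$ is a.m.s.\ with ergodic mean, and that $\mathbb{Y}_I$ is a stationary process sharing the same mean. The coordinate projection $\pi_I:\ac^{\NN}\to(\prod_{i\in I}\ac_i)^{\NN}$ commutes with the shift $T$, so it is a factor map, and $(\RP_I)_*\PP=\pi_{I*}(\RP_*\PP)$. For any $F$ in the target $\sigma$-field,
$$\frac{1}{n}\sum_{j=1}^{n}\pi_{I*}(\RP_*\PP)(T^{-j}F)=\frac{1}{n}\sum_{j=1}^{n}\RP_*\PP(T^{-j}\pi_I^{-1}F)$$
converges by the a.m.s.\ assumption on $\RP$, with limit $\pi_{I*}(\RP^m_*\PP)(F)$; hence $\RP_I$ is a.m.s.\ with mean $\pi_{I*}(\RP^m_*\PP)$. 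Because the preimage under $\pi_I$ of a $T$-invariant set is $T$-invariant, the pushforward of an ergodic measure by $\pi_I$ remains ergodic (and stays stationary), so this mean is ergodic. The same identity applied to $\mathbb{Y}$ identifies $(\mathbb{Y}_I)_*\PP$ with a stationary measure having the same mean as $\RP_I$.

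The second step is to apply the preceding proposition to each $\RP_I$: for every $I\in\tilde k$, the rate $h(\RP_I)$ is well-defined and equal to $h(\mathbb{Y}_I)$, and $\frac{1}{n}M(\RP_I\un)\to 0$. Collecting these equalities across all $I\subset\hat k$ yields $\vec h(\RP)=\vec h(\mathbb{Y})$ component by component.

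Finally, viewing $\RP_J\un$ as the tuple indexed by $J$ whose $i$-th component is the block $(\RV_i(1),\ldots,\RV_i(n))$, the definition of $M'$ unfolds to $M'(\RP_J\un)=\max_{I\subset J}M(\RP_I\un)$. Dividing by $n$, the right-hand side is the maximum of a finite family of sequences each of which is $o(n)$ by step two, hence itself $o(n)$. The only nontrivial ingredient above is the preservation of the ``a.m.s.\ with ergodic mean'' property under factor maps; this is standard ergodic-theoretic material and is the only step that could count as the main obstacle, but it is a routine computation as sketched.
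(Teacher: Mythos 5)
Your proposal is correct and follows essentially the same route as the paper: project onto the coordinates in $I$, observe that the projection is a shift-commuting factor map preserving the a.m.s.-with-ergodic-mean property, and then invoke the preceding single-process proposition for each sub-process. You in fact supply slightly more detail than the paper does (the explicit Ces\`aro computation and the reduction of $M'$ to a finite maximum of $M$'s), but the underlying argument is the same.
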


\begin{proof}
  For $J\subset\hat k$, the natural projection from $(\prod^k_{i=1}A_i)^\NN$ onto $(\prod_{i\in J} A_i)^\NN$ intertwines with the shift-map on both spaces, so it is a factor mapping in the category of dynamical systems. In addition, the projection maps $(\RP)_*\PP$ onto $(\RP_J)_*\PP$ and the a.m.s. property is preserved via the factor mapping. So $(\RP_J)*\PP$ is a.m.s. In particular, the entropy rate $h(\RP_J)$ is well defined and $\frac1n M(\RP_J^{(n)})$ goes to zero. 
\end{proof}

The following theorem is a straightforward consequence of Corollary \ref{cor:entropy-profile-ams} and Theorem \ref{thm:encodings_convolution}.

\begin{thm}\label{thm:encodings_convolution_erg}
  Let $\RP=(\RV(n))_{n\in\NN}$ be a.m.s. with ergodic mean, $h$ be its entropy-rate profile.  If $h_{\hat k}>0$, $1\geq\eps>0$, $\delta<\left(\frac{\min(\eps,h_{\hat k})}{121 h_{\hat k}}\right)^{2^{|\ell|}}$ and $n$ large enough, then the proportion of those encodings $f\in\ec\un_\ell$ that satisfy the conditions
    \begin{align*}
\frac{M'(f(X\un))}{n}\leq \eps\qquad \&\qquad \left\lVert \frac{\vec{H}(f(X\un))}n-h*b\right\rVert_{\max} \leq \eps,
  \end{align*}
is at least
$$1-\exp\left(- e^{\delta n}\right).$$
\end{thm}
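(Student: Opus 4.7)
The plan is to reduce the statement directly to Theorem \ref{thm:encodings_convolution_sequence}. I would take $X\un$ to be the random vector obtained from the first $n$ time steps of the process $\RP$, organized coordinate-by-coordinate as $X\un_i = (\RV_i(1),\ldots,\RV_i(n))$ with values in $\ac\un_i = \ac_i^n$. The output alphabet family $\bc\un$ and the convergence $\vec{H}(\bc\un)/n \to b$ are treated as data implicit in the statement (they must be present in order that the expression $h*b$ make sense).

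First I would verify the entropy-rate convergence $\vec{H}(X\un)/n \to h$. By Corollary \ref{cor:entropy-profile-ams}, the entropy-rate profile $\vec{h}(\RP)$ is well-defined and equals that of any stationary process with the same mean; by construction its coordinates are precisely the limits $h_J = \lim_n H(X\un_J)/n$. The limit vector $h$ is non-zero because its top coordinate $h_{\hat k}$ is assumed positive.

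Next I would verify that $M'(X\un)/n \to 0$. By the second half of Corollary \ref{cor:entropy-profile-ams}, for each $J\subset\hat k$ the scaled fluctuation $M(X\un_J)/n$ tends to zero, and since $\tilde k$ is a finite index set, the maximum $M'(X\un)/n = \max_{J\in\tilde k} M(X\un_J)/n$ tends to zero as well.

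With these two convergences (together with the assumed convergence of $\vec{H}(\bc\un)/n$ to $b$), the hypotheses of Theorem \ref{thm:encodings_convolution_sequence} are satisfied with exactly the same $\eps$ and $\delta$ as in the present statement, so that theorem delivers the claimed lower bound for all sufficiently large $n$. The only mild obstacle is the passage from the pointwise convergences $M(X\un_J)/n \to 0$ to $M'(X\un)/n \to 0$, which is immediate from the finiteness of $\tilde k$; beyond that, the argument is essentially bookkeeping.
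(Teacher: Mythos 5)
Your proposal is correct and matches the paper's intent: the paper itself offers no written proof beyond declaring the theorem a straightforward consequence of Corollary \ref{cor:entropy-profile-ams} together with the quantitative encoding theorem, and your reduction---using the corollary to supply the convergences $\vec{H}(X\un)/n\to h$ and $M'(X\un)/n\to 0$ (the latter via the finiteness of $\tilde k$) and then invoking Theorem \ref{thm:encodings_convolution_sequence}---is exactly the bookkeeping that declaration leaves implicit. Your observation that the convergence $\vec{H}(\bc\un)/n\to b$ must be read as implicit data in the statement is also accurate.
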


Let us point out that the class of a.m.s. processes with ergodic mean covers all ergodic processes (e.g., i.i.d. processes). It also contains all irreducible (possibly periodic) finite-states Markov chains. Other examples of a.m.s. processes can be found in \cite{10.2307/2242939}, below Corollary 4.

At the end of the section, we will exhibit the generality of the theory applying previous theorem and corollary to encodings of a non-stationary and non-independent, but Markov process.

Put $k=\ell=2$, $A_1=A_2=\ZZ_8$. The chain $\RP$ is defined as the random walk on $\ZZ_8\times\ZZ_8$ (a screwed chessboard), where we can move only one step to in the horizontal direction or a one step in the vertical direction. Since $7$ plus $1$ is $0$, we suppose that $7$ and $0$ are adjacent values. Namely, the transition probabilities are given by the following formula:
\begin{align*}
  p_{(i,j)(i',j')}&=
  \begin{cases}
    \frac14,& \text{  if } i=i'\text{ and }|j-j'| \in \{1,n-1\},\\
    \frac14,& \text{  if } j=j'\text{ and }|i-i'| \in \{1,n-1\},\\
    0, &{ otherwise.}
\end{cases}
\end{align*}
Let us fix a deterministic start at the origin, $X(0)=(0,0)$. By the standard analysis of homogeneous Markov chains with finite states we get that the chain is irreducible, periodic with period two, and non-stationary because the initial distribution is not equal to the stationary one. It is straightforward that the stationary distribution is the uniform distribution. From every states, there are four equiprobable ways out. This leads to the fact that $H(X(n+1)|X(n))$ equals to $2\ln 2$ (for any initial distribution). If we focus on the first coordinate of the process, one can see that in the next step, we can increase the value by one (modulo n) with probability $1/4$, decrease the value by one (modulo n) with probability $1/4$ or stay at the same value with probability $1/2$. In particular, $H(X_1(n+1)|X_1(n))$ is equal to $\frac{3}{2}\ln 2$. The same is true for the second coordinate. By homogeneity, the entropy rates $\frac1n H(X\un)$, $\frac1n H(X\un_1)$ and $\frac1n H(X\un_2)$ converge to the mentioned conditional entropies $2\ln 2$, $\frac32\ln 2$ and $\frac32\ln 2$, respectively. Hence, $\frac1n\vec H(X\un)$ converges to a non-zero $h\in\RR^{\tilde k}$, where
$$h:=(h_{\emptyset}, h_1, h_2, h_{1,2})=(0,\frac32\ln 2,\frac32\ln 2,2\ln 2).$$

In addition, let us assume that for encoding of the first $n$ moves in the screwed chessboard, we use $2^n$ colors for vertical position, as well as, for horizontal position. The alphabet $\bc_i\un$, $i=1,2$, can be understood as the set of all binary strings of the length $n$. In the terms of the entropy,
$$\frac1n\left(\vec{H}(\bc\un)_\emptyset,\vec{H}(\bc\un)_1,\vec{H}(\bc\un)_2,,\vec{H}(\bc\un)_{1,2} \right)=(0,\ln 2,\ln 2,2\ln 2).$$

Applying Theorem \ref{thm:encodings_convolution_erg}, we can say that a typical pair of encodings $f=(f_1,f_2)$, $f_1:(\ZZ_8)^n\to 2^n$ and $f_1:(\ZZ_8)^n\to 2^n$, yields the transformations of $X\un$ with the entropies satisfying:
\begin{align*}
  \frac1n \vec{H}(f(X\un))\sim (0,\frac32\ln 2,\frac32\ln 2,2\ln 2)*(0,\ln 2,\ln 2,2\ln 2)=(0,\ln 2,\ln 2,2\ln 2).
\end{align*}

Let us say, that for the evaluation of $\vec{H}(X_i\un)$, $i=1,2$, it was useful, that both processes, $X_i\un$ and $X_2\un$ were Markov. In the next example we relax this property.

Let us now consider a slight variation of the previous example, namely the random walk $\YY$ on the standard chess board. So the values $0$ and $7$ are not adjacent any more. We say that two elements $(i,j)$ and $(i',j')$ from $\ZZ_8\times\ZZ_8$ are adjacent if they are adjacent in one coordinate and equal in the other, i.e. if the sum of differences $|i-i'|$ and $|j-j'|$ equals one. We denote by $V_{i,j}$ the number of pairs from $\ZZ_8\times\ZZ_8$ adjacent to $(i,j)$ and define the transition probabilities as follows,   
\begin{align*}
  p_{(i,j)(i',j')}&=
  \begin{cases}
    \frac1{V_{i,j}},& \text{  if } (i,j)\text{ and } (i',j')\text{ are adjacent},\\
    0, &{ otherwise.}
\end{cases}
\end{align*}
Let us fix a deterministic start at the origin, $Y(0)=(0,0)$. Again, this Markov chain with finite states is homogeneous, irreducible, periodic with period two and non-stationary because the initial distribution is not equal to the stationary one. In order to find the entropy profile of the Markov chain, it is very handful to replace the process by its stationary version that has the same profile due to Corollary \ref{cor:entropy-profile-ams}. It helps to establish the entropy rate of the whole process. Nevertheless, the process $(Y_1(n))_{n\in\NN}$ and $(Y_2(n))_{n\in\NN}$ are not Markov, so their entropy rate can be only estimated. Using appropriate formula for the entropy rate of a Markov chain and
a lower bound for the entropy rate of a function of a Markov chain, we get that $h(\YY)$ is around $1.83\ln 2$ and $h(\YY_1)=h(\YY_2)>1.29\ln 2$. But even very simple observations lead to the facts that all the rates $h(\YY)$, $h(\YY_1)$ and $h(\YY_2)$ belong into the interval $(\ln 2,2\ln 2)$, what is important to evaluate the convolution below.  

If we use an encoding schemes into the same alphabets as in the previous example, we obtain that a typical pair of encodings $f=(f_1,f_2)$, $f_1:(\ZZ_8)^n\to 2^n$ and $f_1:(\ZZ_8)^n\to 2^n$, yields the transformations of $Y\un$ with the entropies satisfying:
\begin{align*}
  \frac1n \vec{H}(f(Y\un))\quad\sim\quad\vec h(\YY)*(0,\ln 2,\ln 2,2\ln 2)=(0,\ln 2,\ln 2,h(\YY))=(0,1,1,1.83)\ln 2.
\end{align*}

\section{Mean fluctuation of the information function}
\label{sec:mean fluctuation}

As far as we know, the mean fluctuation $M(\RP)$ of the information function $\Info_\RP$ from its mean $H(\RP)$ is not established explicitly in the literature. Nevertheless, we found it very efficient and elegant to use it as a quantity that helps to describe asymptotic equipartition property. In this section, we would like to introduce some properties of the mean fluctuation that are interested in its own.

Let us introduce some similar quantities for a probability measure on a finite set, 
\begin{align*}
  \Mean^+(\PP,a)&=\EE_{\PP}\left(\ic_\PP-a \right)^+, & \Div(\PP)&=\Mean(\PP,\ln(\#s(\PP))),\\
  \Mean^-(\PP,a)&=\EE_{\PP}\left(\ic_\PP-a \right)^-, & \Div^+(\PP)&=\Mean^+(\PP,\ln(\#s(\PP))),
\end{align*}
where the notation $a^+$ and $a^-$ stands for positive and negative parts of a number $a$, respectively. In the similar manner, we define $D^-$. We can express the entropy and the divergence from the uniform distribution on the support of the measure as follows:
$$H(\PP)=\Mean(\PP,0),\qquad \Divkl(\PP||\Unif(s(\PP)))=\Div(\PP)-2\Div^+(\PP).$$
Let us point out that the above-mentioned Kullback-Leibler divergence from the uniform distribution, as well as the mean fluctuation from $a=\ln(\#s(\PP))$, is well defined only if the support $s(\PP)$ is finite, whereas the mean fluctuation does not need the finiteness ( we assumed the finiteness of probability spaces only for simplicity).

For a discrete random variable $\RV$, we extend the previous notions in a natural way, $M(X,a):=M(\PP_X,a)$,  etc.

Let us notice that the term $\Div^+(X)$ is bounded above by $\frac{\ln e}e$ and can be often neglected as a term of small magnitude with respect to $D(X)$. Using the notation $\hat s(\RV)$ for the subset of the support of $\PP_\RV$ that contains very small atoms, i.e. the values whose probability is less than $1/(\#s(\RV))$, we get the mentioned bound:
\begin{align*}
 \Div^+(\RV)&=\sum_{x\in\hat s(\RV)}\PP(x)\ln\frac1{\#s(\RV)\PP(x)}\\
  &=\PP(\hat s(\RV))\sum_{x\in \hat s(\RV)}\frac{\PP(x)}{\PP(\hat s(\RV))}\ln\frac1{\#s(\RV)\PP(x)}\\
                                      &\leq \PP(\hat s(\RV))\ln\sum_{x\in \hat s(\RV)}\frac1{\PP(\hat s(\RV))\#s(\RV)}\\
                                      &\leq \PP(\hat s(\RV))\ln\frac1{\PP(\hat s(\RV))}\leq \frac{\ln e}{e}.
\end{align*}

For a random variable, we define the following relative versions of the mean fluctuations:
\begin{align*}
  \Mr(\RV)=\frac{\Mean(\RV)}{H(\RV)},\qquad \Dr(\RV)=\frac{\Div(\RV)}{H(\RV)}.
\end{align*}

The definition is correct as $H(\RV)>0$. Otherwise, $\Mr $ and $\Dr$ are set to be zero. We call $\Mr$ the relative mean fluctuation and $\Dr$ the relative index of uniformity. Let us recall that for a positive random variable $\eta$ the mean fluctuation is bounded as follows:
\begin{align*}
\EE\left|\EE(\eta)-\eta\right|=2\EE\left(\EE(\eta)-\eta\right)^+\leq 2\EE(\eta). 
\end{align*}
Hence, $\Mr$ is bounded by 2, whereas $\Dr$ has no reasonable bound. The following lemma shows that $\Dr$ dominates $\Mr$.
\begin{lem}
If $H(\RV)>0$, then 
$$\Mr(\RV)\leq 2\Dr(\RV).$$
\end{lem}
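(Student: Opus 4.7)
The plan is to reduce the claim to a standard inequality about $L^1$-deviations from the mean versus from an arbitrary constant. Recall that $M(X) = M(X,H(X))$ measures the $L^1$-deviation of $\Info_X$ from its expectation $H(X)$, while $D(X) = M(X,\ln\#s(X))$ measures the $L^1$-deviation of $\Info_X$ from the constant $\ln\#s(X)$. Once we show $M(X) \le 2 D(X)$, dividing both sides by the positive number $H(X)$ yields $\Mr(X) \le 2 \Dr(X)$, as desired.

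The key step is the following elementary fact: for any real random variable $Y$ with finite mean and any constant $a$, one has $\EE|Y-\EE Y| \le 2\,\EE|Y-a|$. To see this, apply the triangle inequality to the decomposition $Y - \EE Y = (Y-a) + (a - \EE Y)$, take expectations, and bound $|a - \EE Y| = |\EE(Y-a)| \le \EE|Y-a|$ by Jensen's inequality. This gives $\EE|Y - \EE Y| \le \EE|Y-a| + |a - \EE Y| \le 2\,\EE|Y-a|$.

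I would then specialize this inequality to $Y = \Info_X$, so $\EE Y = H(X)$, and to $a = \ln\#s(X)$. The left-hand side becomes $M(X)$ and the right-hand side becomes $2 D(X)$, yielding $M(X) \le 2 D(X)$. Dividing by $H(X) > 0$ finishes the proof.

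There is no serious obstacle here; the entire argument is a two-line application of the triangle inequality together with Jensen's inequality applied to the information function, interpreted through the definitions of $M$ and $D$ given just above the lemma. The only thing to be careful about is that the inequality is stated for the relative quantities, so one should note explicitly that $H(X) > 0$ by hypothesis so that $\Mr$ and $\Dr$ are defined by their genuine ratios rather than by the fallback value zero.
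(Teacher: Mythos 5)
Your proof is correct and follows essentially the same route as the paper: the paper's one-line argument is exactly the triangle inequality $M(X,H(X))\le M(X,\ln\#s(X))+|H(X)-\ln\#s(X)|$ followed by the observation that $|H(X)-\ln\#s(X)|=|\EE(\Info_X-\ln\#s(X))|\le D(X)$, which is your Jensen step. Nothing is missing; your write-up just makes explicit the general fact $\EE|Y-\EE Y|\le 2\,\EE|Y-a|$ that the paper uses implicitly.
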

\begin{proof}
 Obviously,
\begin{align*}
  \Mean(X,H(X))&< \Mean(X,\ln(\#s(X)))+|H(X)-\ln(\#s(X))|\leq 2D(X).
 \end{align*}
 \end{proof}

  Since $H(\PP)$ is the expectation of $\ic_{\PP}$, we get
 \begin{align*}
   M^-(\PP)=M^+(\PP),\quad M(\PP)=2M^-(\PP)=2M^+(\PP).
 \end{align*}
 
 \begin{lem}\label{lem:error_conv_comb_I}
  Let $\PP=(1-\eps)\PP'+\eps\PP''$ for three discrete probability measures $\PP$, $\PP'$ and $\PP''$ defined on the same space and $\eps\in (0,1)$. Then
  \begin{align*}
    M(\PP)&\leq2\eps(H(\PP'')+2H(\PP'))+2M(\PP')+10\ln 2.\\
  \end{align*}
\end{lem}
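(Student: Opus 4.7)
The plan is to bound $M(\PP) = \EE_\PP|\ic_\PP - H(\PP)|$ by introducing $\PP'$ as a reference distribution. The two pointwise domination inequalities $\PP \geq (1-\eps)\PP'$ on $s(\PP')$ and $\PP \geq \eps\PP''$ on $s(\PP'')$ are the engine of the argument: after taking logarithms they give the uniform pointwise upper bounds $\ic_\PP \leq \ic_{\PP'} - \ln(1-\eps)$ on $s(\PP')$ and $\ic_\PP \leq \ic_{\PP''} - \ln\eps$ on $s(\PP'')$.

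The first step is the triangle inequality $|\ic_\PP - H(\PP)| \leq |\ic_\PP - H(\PP')| + |H(\PP') - H(\PP)|$, which reduces the estimate to controlling $\EE_\PP|\ic_\PP - H(\PP')|$ and the entropy gap. For $|H(\PP) - H(\PP')|$, concavity of $H$ gives $H(\PP) \geq (1-\eps)H(\PP') + \eps H(\PP'')$, and the mixture upper bound (from conditioning on a Bernoulli indicator of which component was sampled) gives $H(\PP) \leq (1-\eps)H(\PP') + \eps H(\PP'') + h(\eps)$ with $h(\eps) \leq \ln 2$; together they yield $|H(\PP) - H(\PP')| \leq \eps(H(\PP') + H(\PP'')) + \ln 2$.

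Next, I would split $\EE_\PP|\ic_\PP - H(\PP')| = (1-\eps)\EE_{\PP'}|\ic_\PP - H(\PP')| + \eps\EE_{\PP''}|\ic_\PP - H(\PP')|$. On the $\PP'$-part a further triangle inequality reduces it to $\EE_{\PP'}|\ic_\PP - \ic_{\PP'}| + M(\PP')$; the first summand is controlled via the identity $\EE|Y| = 2\EE Y^+ - \EE Y$, using the pointwise bound $(\ic_\PP - \ic_{\PP'})^+ \leq -\ln(1-\eps)$ and the fact that $\EE_{\PP'}(\ic_\PP - \ic_{\PP'}) = \Divkl(\PP'\|\PP) \geq 0$, which together yield $\EE_{\PP'}|\ic_\PP - \ic_{\PP'}| \leq -2\ln(1-\eps)$. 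On the $\PP''$-part I would use the crude bound $|\ic_\PP - H(\PP')| \leq \ic_\PP + H(\PP')$, together with the second pointwise domination giving $\EE_{\PP''}\ic_\PP \leq H(\PP'') - \ln\eps$.

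Collecting all the pieces, every logarithmic singularity appears multiplied by either $(1-\eps)$ or $\eps$, so each is dampened using $-u\ln u \leq 1/e$ for $u \in (0,1]$ and absorbed into a universal constant. The resulting estimate has the form $M(\PP) \leq M(\PP') + 2\eps H(\PP'') + 2\eps H(\PP') + \text{const}$, comfortably stronger than the claimed bound, whose generous slack ($2M(\PP')$ rather than $M(\PP')$, $4\eps H(\PP')$ rather than $2\eps H(\PP')$, and $10\ln 2$ rather than a small multiple of $1/e$) is more than enough. The main obstacle is purely organizational: keeping the many triangle inequalities straight and ensuring that the $-\ln\eps$ and $-\ln(1-\eps)$ blow-ups at the endpoints of $\eps \in (0,1)$ each carry a compensating prefactor.
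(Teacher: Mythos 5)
Your proof is correct and in fact yields a strictly stronger estimate than the one claimed (by my accounting, $M(\PP)\leq M(\PP')+2\eps H(\PP')+2\eps H(\PP'')+\ln 2+3/e$), but it takes a genuinely different route from the paper. The paper passes immediately to the one-sided quantity $M^-(\PP)=\EE_\PP\left(H(\PP)-\ic_\PP\right)^+$ via $M=2M^-$, and then needs an \emph{upper} bound on $\PP(x)$: it splits the domain into the set $A$ where $(1-\eps)\PP'(x)>\eps\PP''(x)$ and its complement, uses $\PP\leq 2\max\left((1-\eps)\PP',\eps\PP''\right)$, and reduces each piece to an $M^-$-type term for $\PP'$ plus an entropy term for $\PP''$. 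You instead keep the two-sided fluctuation, recenter at $\PP'$ by triangle inequalities, decompose the integrating measure as $(1-\eps)\PP'+\eps\PP''$, and exploit the opposite pointwise inequalities $\PP\geq(1-\eps)\PP'$ and $\PP\geq\eps\PP''$ (i.e.\ upper bounds on $\ic_\PP$), together with $\Divkl(\PP'||\PP)\geq 0$ to control $\EE_{\PP'}|\ic_\PP-\ic_{\PP'}|$. Both arguments share the entropy sandwich for mixtures and the identity $\EE|Y|=2\EE Y^+-\EE Y$, but applied to different random variables under different measures. What your version buys is sharper constants ($M(\PP')$ in place of $2M(\PP')$, $2\eps H(\PP')$ in place of $4\eps H(\PP')$, a small additive constant) and a transparent bookkeeping of where each endpoint singularity $-\ln\eps$, $-\ln(1-\eps)$ picks up its compensating prefactor; what the paper's version buys is uniformity of method, since the same $M^-$ template drives Lemma \ref{lem:error_conv_comb_II} and Proposition \ref{pro:number_of_encodings_conditional}. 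As the lemma is not invoked in the self-contained Section \ref{sec:proofs}, the slack in the stated constants is harmless either way.
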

\begin{proof}
    Let us recall that
  \begin{align*}
    (1-\eps)H(\PP')+\eps H(\PP'')\leq H(\PP)\leq (1-\eps)H(\PP')+\eps H(\PP'')+1.
  \end{align*}
  Let $A$ denotes the set of all $x$'s such that $(1-\eps)\PP'(x)>\eps\PP''(x)$. We conclude the proof by the following calculation:
  \begin{align*}
    \frac12 M(\PP)
    &=M^-(\PP)=\sum_x\PP(x)\left(H(\PP)+\ln \PP(x)\right)^+\\  
    &\leq\sum_x\PP(x)\left((1-\eps)H(\PP')+\eps H(\PP'')+\ln 2+\ln \PP(x)\right)^+\\  
    &\leq\eps H(\PP'')+\ln 2+\sum_{x\not\in A}2\eps\PP''(x)\left(H(\PP')+\ln 2\eps\PP''(x)\right)^+\\  
    &\quad+\sum_{x\in A}2(1-\eps)\PP'(x)\left(H(\PP')+\ln 2(1-\eps)\PP'(x)\right)^+\\  
    &\leq\eps H(\PP'')+\ln 2+2\eps(H(\PP')+\ln 2)+2\ln 2+2M^-(\PP').\\
  \end{align*}
\end{proof}

\begin{lem}\label{lem:error_conv_comb_II}
  Let $\PP=(1-\eps)\PP'+\eps\PP''$ for three discrete probability measures $\PP$, $\PP'$ and $\PP''$ defined on the same space and $\eps\in (0,1)$. Then
  \begin{align*}
    M(\PP)&\leq2\left(\eps H(\PP)+\ln2+\sum_x\PP(x)\left(H(\PP)-\Info_{\PP'}(x)\right)^+\right).\\
  \end{align*}
\end{lem}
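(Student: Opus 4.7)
The overall strategy is to expand $M(\PP) = 2M^-(\PP) = 2\sum_x \PP(x)(H(\PP)+\ln\PP(x))^+$ and split the support $s(\PP)$ so that each piece matches a term of the target bound $2(\eps H(\PP) + \ln 2 + \sum_x \PP(x)(H(\PP)-\Info_{\PP'}(x))^+)$. First I would isolate the ``singular'' part $B := s(\PP)\setminus s(\PP')$: since $\PP'(x)=0$ there forces $\PP(x)=\eps\PP''(x)$ and hence $\PP(B)\le\eps$, the naive estimate $(H(\PP)+\ln\PP(x))^+\le H(\PP)$ shows the $B$-contribution to $M^-(\PP)$ is at most $\eps H(\PP)$, accounting for the first summand of the target. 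On the complementary set $s(\PP)\cap s(\PP')$ I would write $\ln\PP(x)=\ln\PP'(x)+\ln(\PP(x)/\PP'(x))$ and apply the subadditivity $(a+b)^+\le a^++b^+$ to obtain
\[
  (H(\PP)+\ln\PP(x))^+ \le (H(\PP)-\Info_{\PP'}(x))^+ + \ln(\PP(x)/\PP'(x))^+ ,
\]
so that integrating against $\PP(x)$ produces exactly the target term $\sum_x\PP(x)(H(\PP)-\Info_{\PP'}(x))^+$ together with an error $E := \sum_{x\in s(\PP')}\PP(x)\ln(\PP(x)/\PP'(x))^+$ which needs to be bounded by $\ln 2$.

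To bound $E$, I would borrow the support split from the proof of Lemma \ref{lem:error_conv_comb_I}, setting $A=\{x\in s(\PP'):(1-\eps)\PP'(x)\ge\eps\PP''(x)\}$. On $A$ the pointwise inequality $\PP(x)\le 2(1-\eps)\PP'(x)\le 2\PP'(x)$ yields $\ln(\PP(x)/\PP'(x))^+\le\ln 2$, so that region contributes at most $\ln 2\cdot\PP(A)\le\ln 2$. On $A^c$ the reverse inequality $\PP(x)\le 2\eps\PP''(x)$ suggests \emph{changing the integrating measure}: using $\PP(x)\le 2\eps\PP''(x)$ and $\ln(\PP(x)/\PP'(x))\le\ln(2\eps\PP''(x)/\PP'(x))$ the contribution becomes at most $2\eps\sum_{A^c}\PP''(x)\ln(2\eps\PP''(x)/\PP'(x))$. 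Invoking the entropic bound $\sum_x\PP''(x)\ln\PP''(x)\le 0$ and $\ln(2\eps)\le\ln 2$, this collapses to a controlled cross-entropy expression $-2\eps\sum_{A^c}\PP''(x)\ln\PP'(x)$ that must be absorbed into the slack $\eps H(\PP)+\ln 2$ already allotted.

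The main obstacle is precisely this last absorption step: the cross-entropy $-\sum\PP''(x)\ln\PP'(x)$ can be arbitrarily large (much larger than $H(\PP)$) if $\PP''$ is concentrated on $\PP'$-rare atoms, so a crude dominating bound does not suffice. The resolution is to exploit the defining constraint of $A^c$, namely $\PP''(x)/\PP'(x)>(1-\eps)/\eps$, which forces $\PP'(x)\le\eps\PP''(x)/(1-\eps)$, together with $\sum_x\PP''(x)=1$, so that the cross-entropy over $A^c$ can be bounded locally in terms of $\sum_x\PP(x)(H(\PP)-\Info_{\PP'}(x))^+$ and the $\eps H(\PP)$ slack. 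Once this delicate estimate is in place, the three contributions from $B$, $A$ and $A^c$ combine to give $M^-(\PP)\le\eps H(\PP)+\ln 2+\sum_x\PP(x)(H(\PP)-\Info_{\PP'}(x))^+$, and doubling yields the claim.
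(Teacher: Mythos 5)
Your decomposition breaks down exactly at the point you flag as ``the main obstacle'', and no local use of the constraint defining $A^c$ can repair it. The subadditivity step
$(H(\PP)+\ln\PP(x))^+\le(H(\PP)-\Info_{\PP'}(x))^++\bigl(\ln\tfrac{\PP(x)}{\PP'(x)}\bigr)^+$
discards the (possibly large) negativity of $H(\PP)+\ln\PP'(x)$, which is precisely what has to compensate the large ratio $\PP(x)/\PP'(x)$ on $A^c$. Concretely, take $\eps=\tfrac12$, $\PP'$ uniform on $\{1,\dots,N\}$ and $\PP''=\delta_1$. Then $B=\emptyset$, $A^c=\{1\}$, the main term $\sum_x\PP(x)(H(\PP)-\Info_{\PP'}(x))^+$ vanishes for large $N$ (since $H(\PP)\approx\tfrac12\ln N+\ln 2<\ln N$), the $A$-contribution is $0$, yet
$E_{A^c}=\PP(1)\ln\tfrac{\PP(1)}{\PP'(1)}\approx\tfrac12\ln N$, while the total slack $\eps H(\PP)+\ln2\approx\tfrac14\ln N$. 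So your chain of inequalities terminates strictly above the claimed bound (the lemma itself still holds here: the true value of $M^-(\PP)$ is about $\tfrac14\ln N$). The proposed ``resolution'' cannot work because the quantity you want to absorb into, the main term, is zero in this example while $E_{A^c}$ is unbounded; the error really behaves like a truncated relative entropy of $\PP''$ with respect to $\PP'$, which is not controlled by any of the three allotted terms.

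The fix --- and the route the paper takes --- is to never compare $\PP$ with $\PP'$ on $A^c$. Write $\PP(x)\le 2\max\bigl((1-\eps)\PP'(x),\eps\PP''(x)\bigr)$, pull the factor $2$ out of the positive part once (this is the $\ln 2$), and keep everything else inside a single positive part. On $A$ this gives $\sum_{x\in A}\PP(x)\bigl(H(\PP)+\ln(1-\eps)\PP'(x)\bigr)^+\le\sum_x\PP(x)\bigl(H(\PP)-\Info_{\PP'}(x)\bigr)^+$ since $\ln(1-\eps)<0$; on $A^c$ it gives $\sum_{x\notin A}\PP(x)\bigl(H(\PP)+\ln\eps\PP''(x)\bigr)^+$, where the bracket is at most $H(\PP)$ pointwise (the logarithm is nonpositive) and the total mass $\PP(A^c)$ is of order $\eps$ because $\PP\le2\eps\PP''$ there --- this is where the $\eps H(\PP)$ term comes from, and it also subsumes your set $B$, since $s(\PP)\setminus s(\PP')\subset A^c$. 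Your treatment of $B$ and of $A$ is correct and essentially coincides with the paper's; only the $A^c$ step needs to be replaced by this comparison with $\eps\PP''$ rather than with $\PP'$.
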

\begin{proof}
    Let us recall that
  \begin{align*}
    (1-\eps)H(\PP')+\eps H(\PP'')\leq H(\PP)\leq (1-\eps)H(\PP')+\eps H(\PP'')+1.
  \end{align*}
  Let $A$ denotes the set of all $x$'s such that $(1-\eps)\PP'(x)>\eps\PP''(x)$. We conclude the proof by the following calculation:
  \begin{align*}
    \frac12 M(\PP)
    &=M^-(\PP)=\sum_x\PP(x)\left(H(\PP)+\ln \PP(x)\right)^+\\  
    &=\ln2+ \sum_{x\not\in A}\PP(x)\left(H(\PP)+\ln \eps\PP''(x)\right)^+\\  
    &\quad+\sum_{x\in A}\PP(x)\left(H(\PP)+\ln(1-\eps)\PP'(x)\right)^+\\  
    &\leq\ln2+\eps H(\PP)+\sum_{x\in A}\PP(x)\left(H(\PP)+\ln\PP'(x)\right)^+.\\
  \end{align*}
\end{proof}

We already mentioned that for a variable $X$ with infinite support $s(X)$, the value $\Div(X)$ and $\Dr(X)$ is not well-defined whereas the values $M(X)$ and $\Mr(X)$ can take arbitrarily small positive values. In this section, we show that the difference between these two notions remains significant even in the case of finite-valued i.i.d. process.

The definition of $M_{rel}$ is introduced in the beginning of Section \ref{sec:proofs}.
\begin{lem}\label{lem:mean-err-for-ergodic}
  Let $\RP=(\RV_i)_{i\in\NN}$ be an ergodic stationary process with strictly positive and finite entropy rate $h$. Then 
$$\lim_{n\to\infty}\Mr(\RV_1,\RV_2,\ldots,\RV_n)=0.$$
\end{lem}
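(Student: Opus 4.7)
The plan is to reduce the statement to the $L^1$ version of the Shannon--McMillan--Breiman theorem. Write $X^{(n)} = (X_1, \ldots, X_n)$. By the definition of entropy rate, $\tfrac1n H(X^{(n)}) \to h$, where $h > 0$ by hypothesis. Hence it suffices to show that $\tfrac1n M(X^{(n)}) \to 0$, since then dividing by $\tfrac1n H(X^{(n)}) \to h > 0$ gives $M_{rel}(X^{(n)}) \to 0$.

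To control $\tfrac1n M(X^{(n)})$, I would invoke the Shannon--McMillan--Breiman theorem in its $L^1$ form for finite-alphabet ergodic stationary processes: $\tfrac1n \ic_{X^{(n)}} \to h$ both almost surely and in $L^1(\PP)$. The $L^1$ convergence is the crucial ingredient, and for a finite-alphabet ergodic stationary source it is a classical strengthening of the a.s. statement (it follows from the a.s. statement together with a domination argument using Barron's lemma or Chung's refinement, valid as soon as $H(X_1) < \infty$, which holds trivially in the finite-alphabet setting assumed in the paper).

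Given $L^1$ convergence, the triangle inequality yields
\begin{align*}
\tfrac1n M(X^{(n)})
&= \EE\bigl|\tfrac1n \ic_{X^{(n)}} - \tfrac1n H(X^{(n)})\bigr| \\
&\leq \EE\bigl|\tfrac1n \ic_{X^{(n)}} - h\bigr| + \bigl|\tfrac1n H(X^{(n)}) - h\bigr|.
\end{align*}
Both terms tend to zero, the first by SMB in $L^1$ and the second by the definition of $h$. Therefore $\tfrac1n M(X^{(n)}) \to 0$, and dividing by $\tfrac1n H(X^{(n)}) \to h > 0$ finishes the proof.

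The main (very mild) obstacle is invoking the $L^1$ version of SMB rather than only the a.s. version; in the finite-alphabet case this presents no genuine difficulty, but it is worth pointing to a reference such as Cover--Thomas or Gray's \emph{Entropy and Information Theory} for the precise statement. Everything else is bookkeeping with the triangle inequality and the positivity of $h$.
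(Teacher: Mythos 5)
Your proof is correct, but it routes through a heavier classical input than the paper does. The paper in fact gives no separate proof of this lemma; it immediately proves the conditional generalization (with $Y_i$ constant recovering this statement), and that proof deliberately avoids the $L^1$ Shannon--McMillan--Breiman theorem. It uses only the \emph{weak} AEP --- convergence of $\frac1n\ic_{X^{(n)}}$ to $h$ in probability --- combined with two elementary observations: the centered variable $\xi_n=\frac1n(\ic_{X^{(n)}}-H(X^{(n)}))$ has mean zero, so $\EE|\xi_n|=2\,\EE\xi_n^-$; and the negative part $\xi_n^-$ is uniformly bounded (by $\frac1n H(X^{(n)})\leq H(X_1)$, using stationarity and subadditivity), so convergence in probability upgrades to $L^1$ convergence of $\xi_n^-$ by bounded convergence, hence of $|\xi_n|$. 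Your argument instead quotes the full $L^1$ SMB theorem (Barron/Chung), which is valid in the finite-alphabet setting and makes the bookkeeping a one-line triangle inequality; what you lose is self-containedness, and the paper's zero-mean trick is worth internalizing because it is exactly what makes the conditional version go through with only convergence in probability of the conditional information function. Both proofs are sound; yours is shorter given the citation, the paper's is more elementary and transfers directly to the conditional case.
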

  
We introduce also the conditional counterpart that covers the previous lemma by putting $Y_i$ to be a constant.
\begin{lem}
  Let $\RP=(\RV_i,Y_i)_{i\in\NN}$ be an ergodic stationary process with strictly positive (and finite) entropy rate $h(X|Y)$. Then 
$$\lim_{n\to\infty}\Mr(\RV^n_1|Y^n_1)=0.$$
\end{lem}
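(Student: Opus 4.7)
The plan is to reduce the conditional statement to the unconditional Lemma~\ref{lem:mean-err-for-ergodic} via the pointwise decomposition of the conditional information function. Writing $\Info_{\RV^n_1|Y^n_1}(x,y) = \Info_{(\RV^n_1,Y^n_1)}(x,y) - \Info_{Y^n_1}(y)$ and $H(\RV^n_1|Y^n_1) = H(\RV^n_1,Y^n_1) - H(Y^n_1)$, subtraction yields a pointwise identity, so by the triangle inequality
$$M(\RV^n_1|Y^n_1) \leq M(\RV^n_1,Y^n_1) + M(Y^n_1).$$
Hence it suffices to bound both right-hand terms by $o(n)$ and to control the denominator $H(\RV^n_1|Y^n_1)$ from below.

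For the joint process $(\RV_i,Y_i)_{i\in\NN}$, ergodicity and stationarity are inherited from $\RP$, and its entropy rate is $h(X,Y) = h(X|Y) + h(Y) \geq h(X|Y) > 0$. Lemma~\ref{lem:mean-err-for-ergodic} therefore applies and gives $\Mr(\RV^n_1,Y^n_1) \to 0$; combined with $H(\RV^n_1,Y^n_1)/n \to h(X,Y)$ this yields $M(\RV^n_1,Y^n_1)/n \to 0$.

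For the marginal process $(Y_i)_{i\in\NN}$, two cases arise. If $h(Y) > 0$, the unconditional lemma applies directly and gives $M(Y^n_1)/n \to 0$. If $h(Y) = 0$, Lemma~\ref{lem:mean-err-for-ergodic} is not applicable, but one falls back on the elementary bound $M(\PP) \leq 2H(\PP)$: indeed, since $\Info_\PP \geq 0$, one has $M^-(\PP) \leq H(\PP)$, and $M(\PP) = 2M^-(\PP)$ as recorded in Section~\ref{sec:mean fluctuation}. Because $H(Y^n_1)/n \to h(Y) = 0$, this forces $M(Y^n_1)/n \to 0$ as well.

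Finally, since $H(\RV^n_1|Y^n_1)/n \to h(X|Y) > 0$, for large $n$ the denominator exceeds $cn$ for some $c > 0$, so combining the bounds above yields
$$\Mr(\RV^n_1|Y^n_1) = \frac{M(\RV^n_1|Y^n_1)}{H(\RV^n_1|Y^n_1)} \leq \frac{M(\RV^n_1,Y^n_1) + M(Y^n_1)}{cn} \longrightarrow 0.$$
The only delicate point is the separate treatment of the degenerate case $h(Y) = 0$, where the unconditional lemma does not cover $(Y_i)$; this is absorbed by the trivial bound $M \leq 2H$. Everything else is a direct reduction to the previous lemma.
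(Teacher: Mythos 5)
Your proof is correct, but it takes a genuinely different route from the paper's. The paper proves the conditional statement directly from the conditional AEP: it sets $\xi_n=\frac1n\bigl(\Info_{\RV^n_1|Y^n_1}-H(\RV^n_1|Y^n_1)\bigr)$, notes that $\xi_n\to 0$ in probability, uses $\EE|\xi_n|=2\EE\xi_n^-$ together with the uniform bound $\xi_n^-\leq H(\RV_1|Y_1)$ to upgrade this to $L^1$ convergence, and divides by $H(\RV^n_1|Y^n_1)/n\to h(X|Y)>0$. You instead reduce to the unconditional Lemma~\ref{lem:mean-err-for-ergodic} via the paper's own subadditivity inequality $M(X|Y)\leq M(X,Y)+M(Y)$, applying it to the joint process (whose rate $h(X,Y)=h(Y)+h(X|Y)>0$ is finite since the alphabets are finite) and to the marginal $(Y_i)$, with a correct and necessary case split: when $h(Y)=0$ the unconditional lemma is unavailable and you rightly fall back on $M\leq 2H$. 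Both arguments are sound; yours is more modular but buys this at the cost of leaning on the unconditional lemma, which the paper states without proof and in fact presents as a \emph{corollary} of the conditional one (``the conditional counterpart covers the previous lemma by putting $Y_i$ to be a constant''). Within the paper's intended logical ordering your reduction would therefore be circular; it is only legitimate if Lemma~\ref{lem:mean-err-for-ergodic} is established independently (which it can be, by exactly the paper's $\xi_n$ argument with trivial $Y$, i.e.\ by Shannon--McMillan). The paper's direct proof avoids this issue and also works verbatim without the chain-rule bookkeeping; your version makes the degenerate case $h(Y)=0$ explicit, which the direct proof never needs to confront.
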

\begin{proof}
  We use the weaker form of conditional AEP for the ergodic processes. Namely, $\frac1n \Info_{\RV^n_1|Y^n_1}$ converges to $h$ in probability. Since $\frac1n H(\RV^n_1|Y^n_1)$ goes to $h(X|Y)$ too, the difference
  $$\xi_n=\frac1n \left(\Info_{\RV^n_1|Y^n_1}-H(\RV^n_1|Y^n_1)\right)$$
  converges to zero in probability. Since $\EE\xi_n$ is zero,
  $$\EE|\xi_n|=2\EE\xi^-_n.$$
  But $\xi^-_n$ goes to zero in $\lc_1$-norm, because it converges in probability and is bounded by $H(\RV_1|Y_1)$. It follows that $\xi_n$ goes to zero in $\lc_1$-norm, i.e. $\EE|\xi_n|$ goes to zero. Thus,
  
  \begin{align*}
    \lim_{n\to\infty}\Mr(\RV^n_1|Y^n_1)&=\lim_{n\to\infty}\frac{\EE\left|\Info_{\RV^n_1|Y^n_1}-H(\RV^n_1|Y^n_1)\right|}{H(\RV^n_1|Y^n_1)}=\lim_{n\to\infty}\frac{\EE|\xi_n|}{H(\RV^n_1|Y^n_1)/n}\\
    &=\frac0{h(X|Y)}=0.
  \end{align*}
\end{proof}

The following lemma is a direct consequence of the property of the divergence for the product measures.
\begin{lem}
If $=(\RV_i)_{i\in\NN}$ is i.i.d., $s(X_1)$ is finite and $H(\RV_1)>0$, then $\Dr(\RV^n_1)$ converges to $\frac{\ln(\#s(\RV_1))-H(\RV_1)}{H(\RV_1)}$. In particular, the sequence $\Dr(\RV^n_1)$ converge to zero if and only if every $X_i$ is uniform.  
\end{lem}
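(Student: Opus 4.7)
The plan is to exploit the decomposition $\Div(\PP)=2\Div^+(\PP)+\Divkl(\PP||\Unif(s(\PP)))$ derived earlier in this section, together with the tensorization of the Kullback--Leibler divergence across product measures.

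First, I would record the basic quantities in the i.i.d.\ setting: since $\RV^n_1=(\RV_1,\ldots,\RV_n)$ is a product of $n$ i.i.d.\ copies, its support equals $s(\RV_1)^n$, so $\ln(\#s(\RV^n_1))=n\ln(\#s(\RV_1))$ and $H(\RV^n_1)=nH(\RV_1)$. Moreover, the uniform measure on $s(\RV_1)^n$ factors as the $n$-fold product of the uniform measure on $s(\RV_1)$, so $\Divkl$ tensorizes to give
$$\Divkl(\PP_{\RV^n_1}||\Unif(s(\RV^n_1)))=n\cdot\Divkl(\PP_{\RV_1}||\Unif(s(\RV_1)))=n\bigl(\ln(\#s(\RV_1))-H(\RV_1)\bigr).$$

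Combining the decomposition with the uniform bound $\Div^+(\RV^n_1)\leq\frac{\ln e}{e}$ already established in this section yields
$$\Div(\RV^n_1)=n\bigl(\ln(\#s(\RV_1))-H(\RV_1)\bigr)+2\Div^+(\RV^n_1),\qquad 0\leq 2\Div^+(\RV^n_1)\leq \tfrac{2\ln e}{e}.$$
Dividing through by $H(\RV^n_1)=nH(\RV_1)$ and letting $n\to\infty$, the bounded $\Div^+$ contribution vanishes in the limit and one obtains exactly the claimed value $(\ln(\#s(\RV_1))-H(\RV_1))/H(\RV_1)$.

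Finally, the ``in particular'' assertion is immediate, because $\ln(\#s(\RV_1))=H(\RV_1)$ holds precisely when $\PP_{\RV_1}$ is uniform on its support, i.e., when every $\RV_i$ is uniform. I do not anticipate any substantive obstacle here: the whole argument is essentially a one-line consequence of the tensorization of $\Divkl$ combined with the uniform boundedness of $\Div^+$, and its purpose is simply to illustrate the gap between $\Mr$ (which tends to zero in this setting, by the earlier lemmas) and $\Dr$ (which does not, unless the distribution is uniform).
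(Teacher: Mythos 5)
Your proof is correct and follows exactly the route the paper intends: the paper offers no written proof, stating only that the lemma is ``a direct consequence of the property of the divergence for the product measures,'' which is precisely the tensorization of $\Divkl$ that you combine with the identity $\Div=\Divkl(\cdot||\Unif(s(\cdot)))+2\Div^+$ and the uniform bound $\Div^+\leq\frac{\ln e}{e}$ from earlier in the section. Your write-up simply supplies the details the paper leaves implicit.
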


The lemmas show that the control of the uniformity of the distribution of the random variable via the relative mean fluctuation is weaker than that via the relative divergence from the uniform distribution on the set of values.

\section{Proofs of main theorems}
\label{sec:proofs}
In this section, we prove Theorems ~\ref{thm:encodings_convolution} and \ref{thm:encodings_convolution_sequence}. The section is self-contained, and the lemmas from the previous section are not involved. Proposition \ref{pro:number_of_encodings_conditional} is proved with several free parameters where the full generality is aimed to serve as a flexible reference in future research. Afterward, many of the parameters are fixed in Proposition \ref{pro:number_of_encodings_simplified} to get a more specific result, which is used in the proofs of the theorems.

First, let us introduce conditional counterparts of quantities defines so far. Given two discrete random variables $X$ and $Y$ defined on the same probability space with values in the countable sets $\xc$ and $\yc$, respectively, we define the conditional information function by the formula $\ic_{X|Y}=\ic_{X,Y}-\ic_Y$, where all the three functions are considered on the domain $\xc\times\yc$.

We can extend the definition of $\Mean$ and $\Mr$ to the conditional case as follows:
\begin{align*}
  \Mean(\RV|Y,a)&=\EE_{\PP_{X,Y}}\left|\ic_{X|Y}-a \right| & \Mean^+(\RV|Y,a)&=\EE_{\PP_{X,Y}}\left(\ic_{X|Y}-a \right)^+\\
  \Mean^-(\RV|Y,a)&=\EE_{\PP_{X,Y}}\left(\ic_{X|Y}-a \right)^- .& 
\end{align*}
Shorter notation $\Mean(\RV|Y)$, $\Mean^+(\RV|Y)$ and $\Mean^-(\RV|Y)$ is used when $a=H(X|Y)$. In addition, when $H(X|Y)>0$, then put 
\begin{align*}
  \Mr(X|Y)=\frac{\Mean(X|Y)}{H(X|Y)}. 
\end{align*}


Since the information function satisfies the following chain rule,
$$\Info_{X|Y}+\Info_Y=\Info_{X,Y},$$
we get
\begin{align}\label{ineq:mean_error}
  M(X|Y)&\leq\EE|\Info_{Y}-H(Y)|+\EE|\Info_{X,Y}-H(X,Y)|\leq M(Y)+M((X,Y)),\\
  M(X,Y)& \leq M(Y)+M((X|Y)).
  \end{align}

  Since $H(X|Y)$ is the expectation of $\ic_{X|Y}$, we get
 \begin{align}
   M^-(X|Y)=M^+(X|Y),\quad M(X|Y)=2M^-(X|Y)=2M^+(X|Y).
 \end{align}
  
The following lemma is a simplified version of Lemma 6 in \cite{Ma07a}. It provides the crucial bound on the probability of the colored atoms that is applied in the next proposition.

\begin{lem}[Matúš, \cite{Ma07a}]\label{lem:small_atoms}
  Let $\PP$ be a sub-probability measure on a finite set $\xc$. For $k\geq 1$, $\eps>0$, the proportion of those maps (encodings) $f$ from $\xc$ into $\hat k$ that satisfy
  $$\PP(f^{-1}(j))\leq \frac{1+\eps}{k}\qquad j\in\hat k,$$
  is at least
  $$1-ke^{-\frac{\eps}{2kq}\ln(1+\eps)},$$
  where $q=\max_{x\in\xc}\PP(x)$.
\end{lem}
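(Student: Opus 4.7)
I would interpret the uniform choice of an encoding $f : \xc \to \hat k$ as drawing, independently for each $x \in \xc$, a uniform random value $f(x) \in \hat k$. Then for any fixed $j \in \hat k$, the ``mass in color $j$'' is
\[
  Z_j \;=\; \PP(f^{-1}(j)) \;=\; \sum_{x \in \xc} \PP(x)\,\mathbf{1}\{f(x)=j\},
\]
a sum of independent $[0,\PP(x)]$-valued random variables with mean $\PP(\xc)/k \leq 1/k$. The target bound is a union bound (over the $k$ colors) of a Chernoff tail for $Z_j$.

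For the Chernoff step, I would bound the MGF using convexity of $e^{t \cdot}$: for each $x$,
\[
  \mathbb{E}[e^{t Z_j^{(x)}}] \;=\; 1 + \tfrac{1}{k}\bigl(e^{t\PP(x)} - 1\bigr) \;\leq\; \exp\!\Bigl(\tfrac{\PP(x)}{kq}\bigl(e^{tq}-1\bigr)\Bigr),
\]
using $e^{t\PP(x)}-1 \leq (\PP(x)/q)(e^{tq}-1)$ for $t \geq 0$. Multiplying over $x$ and using $\PP(\xc) \leq 1$ gives $\mathbb{E}[e^{t Z_j}] \leq \exp\bigl((e^{tq}-1)/(kq)\bigr)$. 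Markov and the optimal choice $t = \ln(1+\eps)/q$ yield
\[
  \Pr\!\left[Z_j > \tfrac{1+\eps}{k}\right] \;\leq\; \exp\!\Bigl(-\tfrac{1}{kq}\bigl[(1+\eps)\ln(1+\eps) - \eps\bigr]\Bigr).
\]

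The last step is to replace $(1+\eps)\ln(1+\eps) - \eps$ by the simpler expression $\tfrac{\eps}{2}\ln(1+\eps)$. I would verify the elementary inequality $(1+\eps)\ln(1+\eps) - \eps \geq \tfrac{\eps}{2}\ln(1+\eps)$ for all $\eps > 0$ by noting that the difference $g(\eps) = (1+\eps/2)\ln(1+\eps) - \eps$ satisfies $g(0) = g'(0) = 0$ and $g''(\eps) = \eps/(2(1+\eps)^2) \geq 0$. Plugging this in and taking the union bound over $j \in \hat k$ gives exactly the claimed bound $k\,e^{-\frac{\eps}{2kq}\ln(1+\eps)}$.

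The only non-routine piece is choosing the right convexity estimate for the MGF so that $q = \max_x \PP(x)$ appears in the correct place, rather than the naive bound $\PP(x) \leq q$, which would degrade the constant. Matching the form $\tfrac{\eps}{2}\ln(1+\eps)$ exactly, rather than the sharper $(1+\eps)\ln(1+\eps) - \eps$, is a minor cosmetic relaxation that makes the bound easier to use in subsequent arguments.
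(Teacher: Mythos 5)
Your proof is correct: the identification of a uniformly random encoding with independent uniform colorings, the MGF bound via $e^{t\PP(x)}-1\leq\frac{\PP(x)}{q}(e^{tq}-1)$, the choice $t=\ln(1+\eps)/q$, the elementary inequality $(1+\eps)\ln(1+\eps)-\eps\geq\frac{\eps}{2}\ln(1+\eps)$, and the union bound over the $k$ colors all check out and reproduce the stated bound exactly. Note that the paper itself gives no proof of this lemma --- it is imported from Matúš's paper as a simplified version of his Lemma 6 --- and your Chernoff-plus-union-bound argument is essentially the standard route taken in that source, so there is no substantive divergence to report.
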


\begin{pro}\label{pro:number_of_encodings_conditional}
  Let $\RV,Y$ be random variables with values in finite sets $\ac_\RV$ and $\ac_Y$, respectively. Let $\bc$ be a finite set, $t_1,t_2,\delta\in\RR^+$, $r,s\in (0,1)$. Let $\alpha=\frac1{t_1} M(X|Y)$, $\beta=\frac1{t_2} M(Y)$, $R=\min(H(X|Y),\ln |\bc|)$ and $\gamma=\alpha^{1-r}+\beta^{1-s}$. Then the proportion of those maps (encodings) $f$ from $\ac_X$ into $\bc$ that satisfy the conditions
$$M(f(X)|Y)\leq 2\gamma R+2\delta+4\ln 2,$$
$$\left|H(f(X)|Y)-R\right| \leq \gamma R+\delta+2\ln 2,$$
is at least
$$1-\exp\left(-\frac{\ln 2}{2}e^{\delta+H(X|Y)-R-\alpha^r t_1}+\ln |\bc|+H(Y)+\beta^s t_2\right).$$
\end{pro}
\begin{proof}
  Denote $R=\min(H(X|Y),\ln |\bc|)$. Surely, $H(f(X)|Y)\leq R$.

  Fix $r,s>0$ and put 
  \begin{align*}
    B=&\{y\mid
        \left|\Info_{Y}-H(Y)\right|<\beta^s t_2
        \},\\
    A=&\{(x,y)| \left|\Info_{X|Y}-H(X|Y)\right|<\alpha^rt_1\}.    
  \end{align*}
  By Markov inequality, we get $\PP_{X,Y}(A)>1-\alpha^{1-r}$ and $\PP_Y(B)>1-\beta^{1-s}$. Let $\PP'$ be the restriction of $\PP_{X,Y}$ on $A$, i.e. $\PP'$ is sub-probability measure defined as follows:
  $$\PP'(x,y)=
  \begin{cases}
  \PP_{X,Y}(x,y),\quad \text{ if }\qquad (x,y)\in A,\quad y\in B,\\
  0,\quad \text{ otherwise.}
\end{cases}
$$

Given $y\in B$, measure $Q_y(x)=\PP'(x,y)/\PP_Y(y)$ is a sub-probability measure that is bounded by $e^{-H(X|Y)+\alpha^r t_1}$. We apply Lemma \ref{lem:small_atoms}. Let $c_y$ be the proportion of encodings that satisfy the condition:
\begin{align}\label{ineq:proofII}
Q_y(f^{-1}(x'))\leq \frac1{|\bc|}+e^{\delta-R},\qquad x'\in\bc.
\end{align}
  By Lemma \ref{lem:small_atoms},
  \begin{align*}
    1-c_y
    &\leq\exp\left(-\frac{e^{\delta-R}|\bc|}{2|\bc|e^{-H(X|Y)+\alpha^r t_1}}\ln(1+e^{\delta-R}|\bc|)+\ln |\bc|\right)\\
    &\leq\exp\left(-\frac{\ln 2}{2}e^{\delta+H(X|Y)-R-\alpha^r t_1}+\ln |\bc|\right).\\
  \end{align*}
  Let $c$ be the proportion of the encodings that satisfy the above-mentioned conditions for all $y\in B$ simultaneously. Then
 \begin{align*}
   1-c
   &\leq (\# B)  \exp\left(-\frac{\ln 2}{2}e^{\delta+H(X|Y)-R-\alpha^r t_1}+\ln {|\bc|}\right)\\
   &\leq \exp\left(-\frac{\ln 2}{2}e^{\delta+H(X|Y)-R-\alpha^r t_1}+\ln {|\bc|}+H(Y)+\beta^st_2\right).\\
  \end{align*}

  For the rest of the proof, let $f:\ac_X\to\bc$ be an encodings that satisfies condition (\ref{ineq:proofII}). We denote by $\PP_{f(X),Y}$ and $\PP'_{f,Id}$ the probability measures on $\bc \times\ac_Y$ that are the images of the probabilities $\PP_{X,Y}$ and $\PP'$ under the mapping $(f,Id):\ac_X\times\ac_Y\to \bc \times\ac_Y$, i.e.
$$\PP_{f(X),Y}(x',y)=\PP_{X,Y}\left(f^{-1}(x')\times\{y\}\right),\qquad  \PP'_{f,Id}(x',y)=\PP'\left(f^{-1}(x')\times\{y\}\right).$$

Put $\PP''=\PP_{f(X),Y}-\PP'_{f,Id}$, $A'=\{(x',y)\mid \PP'_{f,Id}(x',y)> \PP''(x',y)\}$. Let us notice, that $(x',y)\in A'$ implies $\PP'_{f,Id}(x',y)$is strictly positive, so  $y\in B$.

In the following calculation, we use the fact that $\PP_{f(X),Y}\leq 2\max(\PP'',\PP'_{f,Id})$:  
\begin{align*}
  M^-(f(X)|Y,R)&\leq\sum_{(x',y)}\PP_{f(X),Y}(x',y)\left(R+\ln\frac{2\max\left(\PP'_{f,Id}(x',y),\PP''(x',y)\right)}{\PP_Y(y)}\right)^+\\
  &\leq\ln 2+\PP''(\bc\times\ac_Y) R\\
            &\quad
    +
               \sum_{(x',y)\in A',y\in B}\PP'_{f,Id}(x',y)\left(R+\ln\frac{\PP'_{f,Id}(x',y)}{\PP_Y(y)}\right)^+\\
             &\leq\ln 2+(\alpha^{1-r}+\beta^{1-s})R\\
             &\quad +\left(R+\ln\left(\frac1{|\bc|}+e^{\delta-R}\right)\right)\\
             &\leq\ln 2+(\alpha^{1-r}+\beta^{1-s})R+(\delta+\ln 2)\leq \gamma R+\delta+2\ln 2.
\end{align*}
  In addition,
  \begin{align*}
    R-H(f(X)|Y)&=\EE_{\PP_{f(X),Y}}\left(R-\ic_{f(X)|Y}\right)\leq M^-(f(X)|Y,R).
  \end{align*}

  Since $H(f(X)|Y)\leq R$, $|R-H(f(X)|Y)|$ is bounded by $\gamma R+\delta+2\ln 2$. Moreover,
  \begin{align*}
    M(f(X)|Y)&=2M^-(f(X)|Y)=2\EE_{\PP_{f(X),Y}}\left(H(f(X)|Y)-\ic_{f(X)|Y}\right)^+\\
    &\leq 2\EE_{\PP_{f(X),Y}}\left(R-\ic_{f(X)|Y}\right)^+\leq 2M^-(f(X)|Y,R).
  \end{align*}
\end{proof}

If we choose $Y$ to be trivial random variable (deterministic one), then we get the following corollary for one random variable.

\begin{cor}\label{cor:number_of_encodings_single_var}
  Let $\RV$ be random variables with values in finite set $\ac$, $\bc$ be a finite set, $t\in\RR^+$, $r\in\RR$. Let $\alpha=\frac1{t} M(X)$, $R=\min(H(X),\ln |\bc|)$. Then the proportion of those maps (encodings) $f$ from $\ac$ into $\bc$ that satisfy the conditions
$$M(f(X))\leq 2\alpha^{1-r} R+2\delta+4\ln 2,$$
$$\left|H(f(X))-R\right| \leq \alpha^{1-r} R+\delta+2\ln 2,$$
is at least
$$1-\exp\left(-\frac{\ln 2}{2}e^{\delta+H(X)-R-\alpha^r t}+\ln |\bc|\right).$$
\end{cor}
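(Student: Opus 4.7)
The plan is to specialize Proposition \ref{pro:number_of_encodings_conditional} to the case where $Y$ is a constant (deterministic) random variable on a one-point set. First I would verify that with such a $Y$, all conditional quantities reduce to their unconditional counterparts: $H(X|Y) = H(X)$, $M(X|Y) = M(X)$, $H(f(X)|Y) = H(f(X))$, and $M(f(X)|Y) = M(f(X))$, while $H(Y) = 0$ and $M(Y) = 0$. In particular, the parameter $\beta = M(Y)/t_2$ vanishes for any $t_2 > 0$, so the auxiliary parameter $s$ and the auxiliary scale $t_2$ appearing in the proposition play no role.

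Next, I would substitute into the proposition's definitions and bounds. With $\beta = 0$ and using the convention $0^a = 0$ for $a > 0$, we get $\gamma = \alpha^{1-r} + \beta^{1-s} = \alpha^{1-r}$, so the two displayed conditions of the proposition on $M(f(X)|Y)$ and $|H(f(X)|Y) - R|$ become verbatim the two displayed conditions in the corollary (after renaming $t_1$ to $t$). In the failure-probability exponent, the additive terms $H(Y) + \beta^s t_2$ both vanish, leaving precisely
\[
1-\exp\left(-\frac{\ln 2}{2}e^{\delta+H(X)-R-\alpha^r t}+\ln |\bc|\right),
\]
which is the claimed bound.

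There is essentially no obstacle, as the corollary is obtained by a pure specialization. The only minor subtlety lies in how one handles the boundary case $\beta = 0$ in the formulas for $\gamma$ and in the exponential. To avoid any appeal to convention, one can alternatively re-inspect the proof of Proposition \ref{pro:number_of_encodings_conditional} directly with trivial $Y$: the set $B$ then coincides with the entire one-point space $\ac_Y$ automatically (the Markov-inequality step used to control $\PP_Y(B)$ is vacuous), and all $\beta$-contributions drop out of the resulting estimates by inspection, yielding the corollary.
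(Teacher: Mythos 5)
Your proposal is correct and matches the paper exactly: the paper gives no separate proof, stating only that the corollary follows by choosing $Y$ to be a trivial (deterministic) random variable in Proposition \ref{pro:number_of_encodings_conditional}, which is precisely your specialization. Your extra care about the $\beta=0$ boundary case (and the option of re-inspecting the proposition's proof with trivial $Y$) is a sound way to make that one-line reduction fully rigorous.
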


The following corollary introduces the explicit bounds for the change of the joint entropy and the mean error of the joint information function when one variable is encoded into a prescribed alphabet.

\begin{cor}\label{pro:number_of_encodings_simplified}
  Let $\RV,Y$ be random variables with values in finite sets $\ac_\RV$ and $\ac_Y$, respectively. Let $\bc$ be a finite set, $1\leq\eps>0$ and $H>0$ such that
  \begin{align}\label{ineq:HbigIII}
  H\geq H(X,Y),\qquad H\geq \frac{M(X,Y)}{\eps}, \qquad H\geq \frac{M(Y)}{\eps}, \qquad H\geq \frac{4\ln 2}{\eps}.
  \end{align}
  The proportion of those maps (encodings) $f$ from $\ac_X$ into $\bc$ that satisfy the conditions
$$M(f(X)|Y)\leq 10\sqrt{\eps}H,$$
$$\left|H(f(X)|Y)-R\right| \leq 5\sqrt{\eps} H,$$
is at least
$$1-\exp\left(-\frac{\ln 2}{2} e^{\eps H}+\ln |\bc|+2H\right),$$
where $R=\min(H(X|Y),\ln |\bc|)$.
\end{cor}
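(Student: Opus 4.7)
The plan is to apply Proposition~\ref{pro:number_of_encodings_conditional} to the pair $(X,Y)$ and the target alphabet $\bc$ with a judicious choice of the free parameters $t_1,t_2,r,s,\delta$. The natural choice is $t_1=t_2=H$ and $r=s=1/2$, so that $\alpha=M(X|Y)/H$, $\beta=M(Y)/H$, and the terms of the form $\alpha^{r}t_1$, $\beta^{s}t_2$ and $\alpha^{1-r}$, $\beta^{1-s}$ all have the scale $\sqrt{\eps}\,H$ or $\sqrt{\eps}$, which is the right order to match the stated conclusion.

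The first step is to control $\alpha$ and $\beta$. The hypothesis $H\geq M(Y)/\eps$ gives $\beta\leq\eps$ directly. For $\alpha$, I invoke the chain-rule inequality for mean fluctuations recorded in \eqref{ineq:mean_error}, namely $M(X|Y)\leq M(Y)+M(X,Y)$, together with the hypotheses $H\geq M(Y)/\eps$ and $H\geq M(X,Y)/\eps$, to obtain $\alpha\leq 2\eps$. Consequently $\gamma=\sqrt{\alpha}+\sqrt{\beta}\leq(1+\sqrt{2})\sqrt{\eps}$, $\alpha^{1/2}t_1\leq\sqrt{2\eps}\,H$, and $\beta^{1/2}t_2\leq\sqrt{\eps}\,H$.

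Next, I choose $\delta$ of the order $\sqrt{\eps}\,H$, large enough that
\[
\delta+H(X|Y)-R-\alpha^{r}t_{1}\;\geq\;\eps H,
\]
which is the condition that makes the doubly-exponential term in the proposition dominate $e^{\eps H}$ (recall $H(X|Y)-R\geq 0$ by the definition of $R$). Since $\alpha^{1/2}t_1\leq\sqrt{2\eps}\,H$, taking $\delta$ slightly bigger than $(1+\sqrt{2})\sqrt{\eps}\,H$ suffices. For the linear contribution to the exponent, $H(Y)+\beta^{s}t_2\leq H(X,Y)+\sqrt{\eps}\,H\leq 2H$ since $H(Y)\leq H(X,Y)\leq H$ and $\eps\leq 1$; this matches the $+2H$ in the corollary. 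With this choice of $\delta$, the bounds $2\gamma R+2\delta+4\ln 2$ and $\gamma R+\delta+2\ln 2$ from the proposition reduce to expressions of the form $C\sqrt{\eps}\,H$ plus a residual $\ln 2$ term; the residual is absorbed by the hypothesis $H\geq 4\ln 2/\eps$, which implies $\sqrt{\eps}\,H\geq 4\ln 2/\sqrt{\eps}\geq 4\ln 2$. The constants $10$ and $5$ in the statement are then obtained by carrying the bookkeeping through.

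The main obstacle here is not conceptual but numerical: one must check that the specific combinations of $\sqrt{\eps}\,H$ contributions fit inside the coefficients $10$ and $5$ once the $\ln 2$ terms have been absorbed, and simultaneously that $\delta$ is large enough for the exponent bound. Everything is a routine matter of applying $\eps\leq 1$ (so $\eps\leq\sqrt{\eps}$) and the lower bound $\sqrt{\eps}\,H\geq 4\ln 2$ to keep the various constants under control; no new probabilistic idea is needed beyond Proposition~\ref{pro:number_of_encodings_conditional} and the chain-rule inequality.
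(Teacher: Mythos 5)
Your proposal reproduces the paper's proof essentially verbatim: the paper likewise applies Proposition~\ref{pro:number_of_encodings_conditional} with $t_1=t_2=H$, $r=s=\tfrac12$, derives $\alpha\leq 2\eps$, $\beta\leq\eps$, $\gamma\leq(1+\sqrt{2})\sqrt{\eps}$ from the chain-rule inequality \eqref{ineq:mean_error} and the hypotheses \eqref{ineq:HbigIII}, bounds $H(Y)+\sqrt{\eps}H\leq 2H$, and absorbs the $\ln 2$ terms via $H\geq 4\ln 2/\eps$. The only divergence is your choice of $\delta$ just above $(1+\sqrt{2})\sqrt{\eps}H$ (the paper fixes $\delta=(\eps+\sqrt{\eps})H$), which makes the check $\gamma R+\delta+2\ln 2\leq 5\sqrt{\eps}H$ marginal for $\eps$ near $1$, so when carrying out the deferred bookkeeping you should take $\delta$ as small as the exponent condition permits, exactly as the paper does.
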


\begin{proof}
Put
$$t_1=t_2=H,\qquad r=s=\frac12,\qquad \delta=\left(\eps+\sqrt{\eps}\right)H.$$

If we define $\alpha, \beta$ and $\gamma$ as in Proposition  \ref{pro:number_of_encodings_conditional}, then the inequalities (\ref{ineq:HbigIII}) and subadditivity for $M$ (see (\ref{ineq:mean_error})) ensures that $\alpha\leq 2\eps$, $\beta\leq\eps$ and $\gamma\leq(\sqrt{2}+1)\sqrt\eps$.

  By Proposition \ref{pro:number_of_encodings_conditional}
  , the proportion of those maps (encodings) $f$ from $\ac_X$ into $\bc$ that satisfy the conditions
  \begin{align*}
M(f(X)|Y)\leq 2(\sqrt{2}+1)\sqrt\eps H+2(\eps+\sqrt\eps) H+\eps H,\\
\left|H(f(X)|Y)-\min(H(X|Y),\ln |\bc|)\right| \leq (\sqrt{2}+1)\sqrt\eps H+(\eps+\sqrt\eps) H+\eps H/2,
  \end{align*}
is at least
$$1-\exp\left(-\frac{\ln 2}{2} e^{\eps H}+\ln |\bc|+H(Y)+\sqrt{\eps}H\right).$$

Since $M(Y)\leq \eps H$ and $\eps\leq\sqrt\eps$, we get 

  \begin{align*}
2(\sqrt{2}+1)\sqrt\eps H+2(\eps+\sqrt\eps) H+\eps H\leq 10\sqrt\eps H\\
(\sqrt{2}+1)\sqrt\eps H+(\eps+\sqrt\eps) H+\eps H/2\leq 5\sqrt\eps H.
  \end{align*}

In the exponent for the bound of the proportion of the encodings,  $H(Y)+\sqrt{\eps}H$ is bounded by $2H$, since $\eps\leq 1$.
\end{proof}

\begin{proof}[Proof of Theorem ~\ref{thm:encodings_convolution}]
  We will prove the theorem by the induction over $\ell$. 
  For all the proof, fix $k,\ell,\eps,\delta,X=(X_i)_{i\leq k},H$ which satisfy the assumptions of the proposition.
  
  Assume that $\ell=0$. Then the only element $f$ from $\ec_\ell$ is identity, $f(X)=X$, $w=H(X)$. The theorem follows immediately. 

  Let $\ell\geq 1$, $\eps'=\left(\frac{\eps}{121}\right)^2$ and
$$w=\vec{H}(X)*\vec{H}((\bc_i)_{i\leq \ell}),\qquad w'=\vec{H}(X)*\vec{H}((\bc_i)_{i\leq \ell-1}).$$

The set of all $f'\in\ec_{\ell-1}$ that satisfy the conditions
    \begin{align}\label{ineq:MHclose_induction}
M'(f'(X))\leq \eps' H,\qquad \left\lVert \vec{H}(f'(X))-\vec{H}(X)*\vec{H}((\bc_i)_{i\leq \ell-1})\right\rVert_{\max} \leq \eps' H,
\end{align}
is denoted by $\gc'$. Let $\pi:\ec_\ell\to \ec_{\ell-1}$ is the projection defined by the formula $\pi(f)=(f_i)_{i\leq \ell-1}\in \ec_\ell$.

Given $f\in\ec_{\ell-1}$, $J\subset\hat k\setminus\{\ell\}$, $\gc_{f,J}$ is the set of all mappings that satisfy the conditions:
    \begin{align}\label{ineq:MHclose_induction2}
M(g(X_\ell)|f_J(X_J))&\leq 10\sqrt{\eps'} H\\
\left|H(g(X_\ell)|f_J(X_J))-\min(H(X_\ell|f_J(X_J)),\ln|\bc_\ell|)\right|&\leq 5\sqrt{\eps'} H.
\end{align}
The special form of the conditions suits the later application of Corollary \ref{pro:number_of_encodings_simplified}. 
Put $\gc_f=\bigcap_J\gc_{f,J}$, where the intersection goes over all $J\subset\hat k\setminus\{\ell\}$,
$$\gc=\{f\in\ec_\ell \mid \pi(f)\in\gc', f_\ell\in \gc_{\pi(f)}\}.$$

Let $f\in\gc$, $I\subset\hat k$. If $\ell\not\in I$, then $w_I=w'_I$, $f_I$ equals $f'_I$ and (\ref{ineq:MHclose_induction}) ensures that $M(f_I(X_I))$ and the difference $|H(f_I(X_I))-w_I|$ are both bounded by $\eps H$.
If $\ell\in I$, then
$$M(f_I(X_I))\leq M(f_\ell(X_\ell))|f'_J(X_J))+M(f'_J(X_J))\leq 10\sqrt{\eps'}H+\eps' H\leq\eps H,$$
where $J=I\setminus\{\ell\}$. By Proposition \label{pro:convolution_as_upper-bound}, $w_I\geq H(f_I(X_I))$ and 
\begin{align*}
  w_I-H(f_I(X_I))
  &=w_I-H(f_J(X_J))-H(f_\ell(X_\ell)|f_J(X_J))\\
  &=w_I-H(f_J(X_J))-\min(H(X_\ell|f_J(X_J)),\ln|\bc_\ell|)+5\sqrt{\eps'} H\\
  &=w_I-\min(H(f'_I(X_I)),H(f_J(X_J))+\ln|\bc_\ell|)+5\sqrt{\eps'} H\\
  &=w_I-\min(w'_I,w'_J+\ln|\bc_\ell|)+\eps' H+5\sqrt{\eps'} H\\
  &\leq w_I-w_I+\eps' H+5\sqrt{\eps'} H\leq\eps H.
\end{align*}

Hence, condition (\ref{ineq:MHclose}) holds for all $f\in\gc$.

In order to make statements shorter and notations more readable, we put 
$$\ell'=\ell-1,\qquad D=\exp\left(-\frac{\ln 2}{2} e^{\delta H}+(\vec{H}(\bc))_{\hat{\ell}}+2H\right).$$

Since $\delta=\left(\frac{\eps'}{121}\right)^{2^{\ell}}$ and $H(X_{\hat{\ell'}})\leq H(X_{\hat{\ell}})$, the assumption of the theorem remains true when replacing $\ell$ by $\ell'$ and $\eps$ by $\eps'$. By the inductive assumption and the fact that $\pi$ is a mapping $|\ac_\ell|^{|\bc_\ell|}$ to 1, we get
\begin{align*}
c_1:=1-\frac{|\pi^{-1}(\gc')|}{|\ec_{\ell}|}&=1-\frac{|\gc'|}{|\ec_{\ell-1}|}\leq (\ell-1)\, 2^{k-1} D.  
\end{align*}

Let $f'\in\gc'$, $J\subset\hat k\setminus\{\ell\}$. We will apply Corollary \ref{pro:number_of_encodings_simplified}, where the variables $X$ and $Y$ are understood as $X_\ell$ and $f_J(X_J)$, respectively, and $\eps$ is replaced by $\eps'$.

First we have to verify the assumptions of the proposition, namely
  \begin{align*}
  H\geq H(X_\ell,f_J(X_J)),\qquad H\geq \frac{M(X_\ell,f_J(X_J))}{\eps'}, \qquad H\geq \frac{M(f_J(X_J))}{\eps'}
  \end{align*}
and $H\geq \frac{4\ln 2}{\eps'}$. The first inequality follows from the fact that $H(f_J(X_J))\leq H(X_J)$ for every $J\in\tilde k$. The second and the third one are the immediate consequence of the fact that the numerators are bounded $M'(f(X))$. The last one follows from the inequality $\eps'>\eps$.

Applying Corollary \ref{pro:number_of_encodings_simplified},
$$c_{f',J}:=1-\frac{|\gc_{f',J}|}{|\bc_\ell|^{|\ac_\ell|}}\leq \exp\left(-\frac{\ln 2}{2} e^{\delta H}+(\vec{H}(\bc))_{\hat{\ell'}}+2H\right)\leq D.$$
Hence,
\begin{align*}
c_{f'}&:=1-\frac{|\gc_{f'}|}{|\bc_\ell|^{|\ac_\ell|}}
\leq\sum_{J\subset\hat k\setminus\{\ell\}}1-\frac{|\gc_{f',J}|}{|\bc_\ell|^{|\ac_\ell|}}\leq 2^{k-1}D.
\end{align*}

By the definition of $\gc$, there is one-to-one correspondence between its elements and pairs $(f',g)$ where $f'\in\gc$ and $g\in\gc_{f'}$, given by the equality $f_i=f'_i$, $i\leq \ell-1$, $f_\ell=g$.

Hence
\begin{align*}
1-\frac{|\gc|}{|\ec_\ell|}
&=1-\frac{\sum_{f'\in\gc'}|\gc_{f'}|}{|\ec_\ell|}
\leq\frac{|\gc'|\cdot|\bc_\ell|^{|\ac_\ell|}-\sum_{f'\in\gc'}|\gc_{f'}|}{|\ec_\ell|}\\
 &\leq\frac{|\ec_\ell|-|\gc'|\cdot|\bc_\ell|^{|\ac_\ell|}+\sum_{f'\in\gc'}|\bc_\ell|^{|\ac_\ell|}-|\gc_{f'}|}{|\ec_\ell|}\\
 &\leq 1-\frac{|\gc'|}{|\ec_{\ell-1}|}+\frac1{|\ec_{\ell-1}|}
 \sum_{f'\in\gc'}\frac{|\bc_\ell|^{|\ac_\ell|}-|\gc_{f'}|}{|\bc_\ell|^{|\ac_\ell|}}\\
 &\leq c_1+\frac1{|\ec_{\ell-1}|}\sum_{f'\in\gc'}c_{f'}\leq (\ell-1)\,2^{k-1}D+\frac{|\gc'|}{|\ec_{\ell-1}|}2^{k-1}D\leq \ell\, 2^{k-1}D.
\end{align*}

\end{proof}

Before we start to prove Theorem ~\ref{thm:encodings_convolution_sequence}, let us recall its property used in the proof.
For $u,u'\in\RR^{\tilde k}$ and $v,v'\in\RR^{\tilde \ell}$,
$$||u*v-u'*v||_{max}\leq ||u-u'||_{max},\quad ||u*v-u*v'||_{max}\leq ||v-v'||_{max}.$$

It follows from the very general fact; given two vectors of real numbers $(a_i)_{i\leq m}$ and $(b_i)_{i\leq m}$,
$$|\min_{i\leq m} a_i-\min_{i\leq m} b_i|\leq\max_{i\leq m}|a_i-b_i|.$$

\begin{proof}[Proof of Theorem \ref{thm:encodings_convolution_sequence}]

  Use the shorter notation:
  $$h\un=\frac{\vec{H}(X\un)}n,\qquad g\un=\frac{\vec{H}(f(X\un))}n,\qquad b\un=\frac{\vec{H}(\bc\un)}n.$$
  Let $\delta<\left(\frac{\min(\eps,h_{\hat k})}{121 h_{\hat k}}\right)^{2^{|\ell|}}$. There exists $\eps''<\eps'<\min(\eps,h_{\hat k})$ and $\delta'>\delta$ such that $\delta'=\left(\frac{\eps''}{121 h_{\hat k}}\right)^{2^{|\ell|}}$. By the assumptions of the theorem, for $n$ big enough, $H(X\un)$ exceeds both , $(\delta')^{-1}M'(X\un)$ and $(\delta')^{-1}4\ln 2$. By Theorem \ref{thm:encodings_convolution}, the proportion of the encodings from $\ec\un_\ell$ that satisfy
    \begin{align}\label{ineq:MHcloseII}
M'(f(X\un))\leq \frac{\eps''}{h_{\hat k}}\, nh\un_{\hat k}\qquad \&\qquad \left\lVert  ng\un-nh\un *nb\un\right\rVert_{\max} \leq \frac{\eps''}{h_{\hat k}}\, nh\un_{\hat k},
  \end{align}
is at least
$$1-\exp\left(-\frac{\ln 2}2e^{\delta' n}+nb\un_{\hat \ell}+2nh\un_{\hat k} +(k-1)\ln 2+\ln\ell\right).$$

Since $h\un$ goes to $h$ and $h_{\hat k}>0$, $\frac{h\un_{\hat k}}{h_{\hat k}}$ is smaller than $\frac{\eps'}{\eps''}$ for $n$ large enough. In such a case, condition (\ref{ineq:MHcloseII}) implies
\begin{align}\label{ineq:MHcloseIII}
\frac{M'(f(X\un))}{n}<\eps'\qquad \&\qquad \left\lVert  g\un-h\un *b\un\right\rVert_{\max} <\eps',
\end{align}

In addition,
\begin{align*}
  \left\lVert  h*b-h\un *b\un\right\rVert_{\max} &\leq \left\lVert  h*b-h\un *b\right\rVert_{\max} +\left\lVert  h\un*b-h\un *b\un\right\rVert_{\max}\\
&\leq \left\lVert  h-h\un\right\rVert_{\max} +\left\lVert  b-b\un\right\rVert_{\max}  
\end{align*}
The last two terms goes to zero. Thus, for $n$ large enough, their sum is bounded by $\eps-\eps'$ and condition (\ref{ineq:MHcloseII}) implies
$$ \left\lVert  g\un-h*b\right\rVert_{\max} <\eps.$$

It remains to prove, that the lower bound for the proportion of the encodings satisfying (\ref{ineq:MHcloseII}) is larger than $1-\exp\left(- e^{\delta n}\right)$. But in the exponent of the exponential term in the bound, there is only one exponential term $-\frac{\ln 2}{2}e^{\delta' n}$, the others are at most linear. Since $\delta<\delta'$, the term $-e^{\delta n}$ is eventually bigger than all the exponent term from the bound and
$$1-\exp\left(-\frac{\ln 2}2e^{\delta' n}+nb\un_{\hat \ell}+2nh\un_{\hat k} +(k-1)\ln 2+\ln\ell\right)>1-\exp\left(- e^{\delta n}\right).$$
\end{proof}

\section*{Acknowledgments}
We are greatly indebted to Prof. Laszlo Csirmaz for useful discussions and comments that helped to find the final shape of the presented results. We would also like to thank anonymous reviewers for their comments that greatly improved the quality and exposition of this paper.
\bibliographystyle{alpha}
\bibliography{biblio} 

 \end{document}